\newcommand{\mbZ}{\mathbb Z}
\newcommand{\mbC}{\mathbb C}
\def\CP1{\mathbb{C}\mathrm{P}^1}
\newcommand{\mbP}{\mathbb P}
\newcommand{\oM}{\overline{\mathcal M}}
\newcommand{\of}{\overline f}
\newcommand{\oh}{{\overline h}}
\newcommand{\og}{\overline g}
\newcommand{\eps}{\varepsilon}
\def\d{\partial}
\def\un{{1\!\! 1}}
\newcommand{\M}{\mathcal M}
\newcommand{\cA}{{\mathcal A}}
\newcommand{\cB}{\mathcal B}
\newcommand{\<}{\left <}
\renewcommand{\>}{\right >}
\newcommand{\hcA}{{\widehat{\mathcal A}}}
\newcommand{\hcB}{{\widehat{\mathcal B}}}
\newcommand{\hLambda}{{\widehat\Lambda}}
\newcommand{\hT}{\widehat T}
\newcommand{\hQ}{\widehat Q}
\newcommand{\hZ}{\widehat Z}
\newcommand{\im}{\mathop{\mathrm{im}}\nolimits}
\renewcommand{\deg}{\mathop{\mathrm{deg}}\nolimits}
\newcommand{\Coef}{\mathop{\mathrm{Coef}}\nolimits}
\newcommand{\Mat}{\mathop{\mathrm{Mat}}\nolimits}
\def\virt{{\rm virt}}
\def\DR{{\rm DR}}
\def\LM{{\rm LM}}
\newcommand*{\der}[2]{\frac{\d #1}{\d #2}}
\newtheorem{theorem}{Theorem}[section]
\newtheorem{lemma}[theorem]{Lemma}
\theoremstyle{definition}
\newtheorem{definition}[theorem]{Definition}
\newtheorem{remark}[theorem]{Remark}
\numberwithin{equation}{section}
\title{Double ramification cycles and integrable hierarchies}
\author{A. Buryak}
\address{A.~Buryak:\newline
Department of Mathematics,
ETH Zurich,\newline
Ramistrasse 101 8092, HG G27.1, Zurich, Switzerland.}
\email{buryaksh@gmail.com}
\begin{document}

\begin{abstract}
It this paper we present a new construction of a hamiltonian hierarchy associated to a cohomological field theory. We conjecture that in the semisimple case our hierarchy is related to the Dubrovin-Zhang hierarchy by a Miura transformation and check it in several examples.
\end{abstract}

\maketitle

\section{Introduction}

In the last two decades there was a great progress in understanding relations between the topology of the moduli space of stable curves and integrable hierarchies of partial differential equations. The first result in this direction was the famous Witten's conjecture~(\cite{Wit91}) proved by Kontsevich~(\cite{Kon92}). It says that the generating series of the intersection numbers on the moduli space of stable curves is a solution of the KdV hierarchy. After that it has been expected that the Gromov-Witten invariants of any target space (or, more generally, correlators of any cohomological field theory) should be governed by an integrable hierarchy. We refer the reader to \cite{OP06,CDZ04,DZ04,MT08,CvdL13,MST14} for different results concerning the Gromov-Witten invariants of $\mbP^1$ and orbifold $\mbP^1$. The Hodge integrals on the moduli space of stable curves can be described by integrable hierarchies in two different ways~(\cite{Kaz09,Bur13}). Witten's original conjecture can be generalized to the moduli space of $r$-spin curves and its generalizations (\cite{Wit93,FSZ10,FJR13}). An integrable hierarchy corresponding to the Gromov-Witten theory of the resolved conifold is discussed in \cite{BCR12,BCRR14}. A hierarchy that governs the degree zero Gromov-Witten invariants was constructed in~\cite{Dub13}. KdV type equations for the intersection numbers on the moduli space of Riemann surfaces with boundary were introduced in \cite{PST14} and further studied in \cite{Bur14a,Bur14b}.

In \cite{DZ05} B. Dubrovin and Y. Zhang suggested a very general approach to the problem. The Gromov-Witten invariants of any target space (or, more generally, the correlators of any cohomological field theory) are packed in a generating series that is called the potential. If the potential is conformal and semisimple, then B. Dubrovin and Y. Zhang gave a construction of a bihamiltonian hierarchy of PDEs such that the potential is the logarithm of a tau-function of this hierarchy. Their construction was generalized to the non-conformal case in \cite{BPS12a} (see also~\cite{BPS12b}). 

In this paper we give a new construction of a hamiltonian hierarchy associated to a potential of Gromov-Witten type. We must immediately say that, in general, our hierarchy is different from the Dubrovin-Zhang hierarchy. Our construction is motivated by Symplectic Field Theory (see~\cite{EGH00}) and is based on the integration over the double ramification cycles. That is why we call our hierarchy the DR hierarchy. Our construction is quite different from Dubrovin and Zhang's construction and also from the construction in \cite{BPS12a}. First of all, we don't need the assumption of the semisimplicity. Second, the equations of the DR hierarchy are differential polynomials immediately by the construction, while the polynomiality of the equations of the Dubrovin-Zhang hierarchy is completely non-obvious (see~\cite{BPS12a}). Finally, the hamiltonian structures are in general very different. In the DR hierarchy it is given by the operator $\eta\d_x$, where $\eta$ is the matrix of the scalar product in a cohomological field theory. In the Dubrovin-Zhang hierarchy the hamiltonian structure can be much more complicated. 

We conjecture that, if a cohomological field theory is semisimple, then the DR hierarchy is related to the Dubrovin-Zhang hierarchy by a Miura transformation. We check it in the case of the trivial cohomological field theory and in the case of the cohomological field theory formed by the Hodge classes: $1+\eps\lambda_1+\eps^2\lambda_2+\ldots+\eps^g\lambda_g\in H^*(\oM_{g,n};\mbC)$. 


\subsection{Organization of the paper}

Section~\ref{section:algebraic preliminaries} contains the necessary algebraic formalism in the theory of formal partial differential equations that is needed for our constructions. 

In Section~\ref{section:geometric preliminaries} we recall the main geometric notions: the definition of a cohomological field theory and the definition of the double ramification cycles. 

Section~\ref{section:construction} contains the main result of the paper: the construction of the DR hierarchy. The main statement here is Theorem~\ref{theorem:main} that says that the constructed Hamiltonians commute with each other. We also formulate several general properties of the DR hierarchy, compute the hierarchy explicitly in two examples and check our conjecture in these cases. 

Section~\ref{section:commutativity} is devoted to the proof of Theorem~\ref{theorem:main}.

Appendix~\ref{appendix} contains some technical algebraic results.


\subsection{Acknowledgments}

We thank B. Dubrovin, S. Shadrin, R. Pandharipande and D. Zvonkine for discussions related to the work presented here.

We would like to thank the anonymous referee for valuable remarks and suggestions that allowed us to improve the exposition of this paper.

This work was supported by grant ERC-2012-AdG-320368-MCSK in the group of R. Pandharipande at ETH Zurich, by the Russian Federation Government grant no. 2010-220-01-077 (ag. no. 11.634.31.0005), the grants RFFI 13-01-00755 and NSh-4850.2012.1.


\section{Algebraic preliminaries: formal approach to partial differential equations}\label{section:algebraic preliminaries}

In this section we introduce the language and the formalism in the theory of formal partial differential equations that will be necessary in our construction of the DR hierarchy. The material comes almost entirely from \cite{DZ05} and \cite{Ros10}.

In Section~\ref{subsection:differential polynomials} we introduce the algebra of differential polynomials and the space of local functionals. In Section~\ref{subsection:poisson structure} we describe a certain class of Poisson structures on the space of local functionals. In Section~\ref{subsection:Poisson algebra} we introduce a Poisson algebra $\cB_N$ and discuss its relation with the space of local functionals. In Section~\ref{subsection:extended spaces} we define certain extensions of the ring of differential polynomials, the space of local functionals and the Poisson algebra $\cB_N$. 


\subsection{Differential polynomials and local functionals}\label{subsection:differential polynomials}

Here we recall the definitions of the ring of differential polynomials and the space of local functionals. 

\subsubsection{Differential polynomials}

Let us fix an integer $N\ge 1$. Consider variables $u^\alpha_j$, $1\le \alpha\le N$, $j\ge 0$. We will often denote $u^\alpha_0$ by $u^\alpha$ and use an alternative notation for the variables $u^\alpha_1,u^\alpha_2,\ldots$:
$$
u^\alpha_x:=u^\alpha_1,\quad u^\alpha_{xx}:=u^\alpha_2,\ldots.
$$
Denote by $\cA_N$ the ring of polynomials in the variables $u^\alpha_j$, $j\ge 1$,
$$
f(u;u_x,u_{xx},\ldots)=\sum_{m\ge 0}\sum_{\substack{1\le\alpha_1,\ldots,\alpha_m\le N\\j_1,\ldots,j_m\ge 1}}f^{j_1,j_2,\ldots,j_m}_{\alpha_1,\alpha_2,\ldots,\alpha_m}(u)u^{\alpha_1}_{j_1}u^{\alpha_2}_{j_2}\ldots u^{\alpha_m}_{j_m}
$$ 
with the coefficients $f^{j_1,j_2,\ldots,j_m}_{\alpha_1,\alpha_2,\ldots,\alpha_m}(u)$ being power series in $u^1,\ldots,u^N$. Elements of the ring $\cA_N$ will be called differential polynomials.

Let us introduce a gradation $\deg_{dif}$ on the ring $\cA_N$ of differential polynomials putting 
$$
\deg_{dif} u^\alpha_k=k,\, k\ge 1;\quad \deg_{dif} f(u)=0.
$$
This gradation will be called differential degree.

The operator $\d_x\colon\cA_N\to\cA_N$ is defined as follows:
$$
\d_x:=\sum_{\alpha=1}^N\sum_{s\ge 0}u^\alpha_{s+1}\frac{\d}{\d u^\alpha_s}.
$$ 

\subsubsection{Local functionals}

Let $\Lambda_N:=\cA_N/\im(\d_x)$. There is the projection $\pi\colon\cA_N\to\cA_N/\im(\d_x)$ and we will use the following notation:
$$
\int h dx:=\pi(h),
$$
for any $h\in\cA_N$. Elements of the space $\Lambda_N$ will be called local functionals. 

For a local functional $\oh=\int hdx\in\Lambda_N$, the variational derivative $\frac{\delta\oh}{\delta u^\alpha}\in\cA_N$ is defined as follows:
$$
\frac{\delta\overline h}{\delta u^\alpha}:=\sum_{i\ge 0}(-\d_x)^i\frac{\d h}{\d u^\alpha_i}.
$$ 

It is clear that the gradation $\deg_{dif}$ on $\cA_N$ induces a gradation on the space $\Lambda_N$ that will be also called differential degree.

Clearly, the derivative $\der{}{u^\alpha}\colon\cA_N\to\cA_N$ commutes with the operator $\d_x$. Therefore, the derivative $\der{}{u^\alpha}$ is correctly defined on the space of local functionals~$\Lambda_N$.


\subsection{Poisson structure on $\Lambda_N$ and hamiltonian systems of PDEs}\label{subsection:poisson structure}

In this section we introduce a certain class of Poisson structures on the space of local functionals and review the notion of a hamiltonian system of partial differential equations. 

\subsubsection{Poisson structure on $\Lambda_N$}

Let $K=(K^{\alpha,\beta})_{1\le\alpha,\beta\le N}$ be a matrix of operators 
\begin{gather}\label{eq: dif. operator}
K^{\alpha,\beta}=\sum_{j\ge 0}f^{\alpha,\beta}_j\d_x^j,
\end{gather}
where $f^{\alpha,\beta}_j\in\cA_N$ and the sum is finite. Let us define the bracket $\{\cdot,\cdot\}_K\colon\Lambda_N\times\Lambda_N\to\Lambda_N$ by
\begin{gather*}
\{\overline g,\overline h\}_K:=\int\sum_{\alpha,\beta}\frac{\delta\overline g}{\delta u^\alpha}K^{\alpha,\beta}\frac{\delta\overline h}{\delta u^\beta}dx.
\end{gather*}
The operator $K$ is called hamiltonian, if the bracket $\{\cdot,\cdot\}_K$ is antisymmetric and satisfies the Jacobi identity. It is well known that, for any symmetric matrix $\eta=(\eta^{\alpha,\beta})\in\Mat_{N,N}(\mbC)$, the operator $\eta\d_x$ is hamiltonian (see e.g.~\cite{DZ05}).

\subsubsection{Hamiltonian systems of PDEs}

A system of partial differential equations  
\begin{gather}\label{eq: system}
\frac{\d u^\alpha}{\d \tau_i}=f^\alpha_i(u;u_x,\ldots),\quad 1\le\alpha\le N,\quad i\ge 1,
\end{gather}
where $f^\alpha_i\in\cA_N$, is called hamiltonian, if there exists a hamiltonian operator $K=(K^{\alpha,\beta})$ and a sequence of local functionals $\overline h_i\in\Lambda_N$, $i\ge 1$, such that
\begin{align*}
&f^\alpha_i=\sum_\beta K^{\alpha,\beta}\frac{\delta\oh_i}{\delta u^\beta},\\
&\{\oh_i,\oh_j\}_K=0,\quad\text{for $i,j\ge 1$}.
\end{align*}
The local functionals $\oh_i$ are called the Hamiltonians of the system \eqref{eq: system}. 


\subsection{Poisson algebra $\cB_N$}\label{subsection:Poisson algebra}

Consider formal variables $p^\alpha_n$, where $1\le\alpha\le N$ and $n\in\mbZ_{\ne 0}$. Let $\cB_N\subset\mbC[[p^\alpha_n]]$ be the subalgebra that consists of power series of the form
$$
f=\sum_{k\ge 0}\sum_{\substack{1\le\alpha_1,\ldots,\alpha_k\le N\\n_1,\ldots,n_k\ne 0\\n_1+\ldots+n_k=0}}f_{\alpha_1,\ldots,\alpha_k}^{n_1,\ldots,n_k}p^{\alpha_1}_{n_1}p^{\alpha_2}_{n_2}\ldots p^{\alpha_k}_{n_k},
$$
where $f_{\alpha_1,\ldots,\alpha_k}^{n_1,\ldots,n_k}$ are complex coefficients.

Let $\eta=(\eta^{\alpha,\beta})_{1\le\alpha,\beta\le N}\in\Mat_{N,N}(\mbC)$ be a symmetric matrix. We endow the algebra $\cB_N$ with the following Poisson structure:
\begin{gather}\label{eq:poisson}
\{p^\alpha_m,p^\beta_n\}_\eta:=i m \eta^{\alpha,\beta}\delta_{m+n,0}.
\end{gather}

Let us define an important map $T_0\colon\Lambda_N\to\cB_N$. Consider an arbitrary differential polynomial $f\in\cA_N$. We want to consider the variable $u^\alpha$ as a formal function of $x$ that has a Fourier expansion with coefficients $p^\alpha_n$. Formally, we make the substitution $u^\alpha_k=\sum_{n\ne 0}(in)^k p^\alpha_n e^{inx}$. Then we have 
$$
\left.f\right|_{u^\alpha_k=\sum_{n\ne 0}(in)^k p^\alpha_n e^{inx}}=\sum_{m\in\mbZ} P_m e^{imx}, 
$$
where $P_m\in\mbC[[p^\alpha_n]]$. Denote the right-hand side by $T(f)$. Clearly, $P_0\in\cB_N$. Let $T_0(f):=P_0$. Obviously, we have $\left.T_0\right|_{\im\d_x}=0$. Therefore, the map $T_0$ induces a map $\Lambda_N=\cA_N/(\im\d_x)\to\cB_N$, that we will also denote by $T_0$. 

Denote by $\cB_N^{pol}\subset\cB_N$ the subspace that consists of power series of the form
$$
f=\sum_{k\ge 0}\sum_{\substack{1\le\alpha_1,\ldots,\alpha_k\le N\\n_1,\ldots,n_k\ne 0\\n_1+\ldots+n_k=0}}P_{\alpha_1,\ldots,\alpha_k}(n_1,\ldots,n_k)p^{\alpha_1}_{n_1}p^{\alpha_2}_{n_2}\ldots p^{\alpha_k}_{n_k},
$$
where $P_{\alpha_1,\ldots,\alpha_k}(z_1,\ldots,z_k)\in\mbC[z_1,\ldots,z_k]$ are some polynomials and the degrees of them in the power series $f$ are bounded from above. It is easy to see that $\im(T_0)\subset\cB_N^{pol}$. Let us formulate two important properties of the map~$T_0$.

\begin{lemma}\label{lemma:surjectivity}
The map $T_0\colon\Lambda_N\to\cB^{pol}_N$ is surjective. The kernel of it is $N$-dimensional and is spanned by the local functionals $\int u^\alpha dx$, $1\le\alpha\le N$.
\end{lemma}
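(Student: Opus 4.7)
The plan is to prove surjectivity first, then identify the kernel. For the containment $\mathrm{span}\{\int u^\alpha dx\} \subseteq \ker T_0$, note that substituting $u^\alpha = \sum_{n\ne 0} p^\alpha_n e^{inx}$ yields a Fourier series with no constant mode, so $T_0(u^\alpha) = 0$. Linear independence of the $N$ classes in $\Lambda_N$ follows from a differential-degree argument: if $\sum_\alpha c_\alpha u^\alpha = \d_x g$, then $\d_x g$ has $\deg_{dif}=0$, forcing $g$ to be constant and hence $c_\alpha=0$.

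For surjectivity, expand an arbitrary $f\in\cB_N^{pol}$ coefficient-by-coefficient into monomials,
$$P_{\alpha_1,\ldots,\alpha_k}(n_1,\ldots,n_k)=\sum c^{j_1,\ldots,j_k}_{\alpha_1,\ldots,\alpha_k}\,n_1^{j_1}\cdots n_k^{j_k},$$
and construct the preimage
$$h:=\sum c^{j_1,\ldots,j_k}_{\alpha_1,\ldots,\alpha_k}\,i^{-(j_1+\cdots+j_k)}\,u^{\alpha_1}_{j_1}\cdots u^{\alpha_k}_{j_k},$$
with the convention $u^\beta_0:=u^\beta$. The uniform bound on $\sum j_s$ keeps the polynomial degree in the $u^\beta_j$ ($j\ge 1$) bounded, while the indices with $j_s=0$ contribute power-series coefficients in the $u^\beta_0$'s; thus $h$ is a legitimate element of $\cA_N$, and a direct substitution gives $T_0(h)=f$.

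For the reverse kernel inclusion --- the main obstacle --- I would use the bigrading of $\cA_N$ by $\deg_u$ (total polynomial degree in all $u^\alpha_j$, $j\ge 0$) together with $\deg_{dif}$. Both $\d_x$ and $T_0$ respect this bigrading, each piece $\cA_N^{(k,d)}$ is finite-dimensional, and $T_0$ maps $\cA_N^{(k,d)}$ into the degree-$d$-in-$\vec n$ subspace of $\cB_N^{pol}$ with $k$ factors of $p$. For $k=0$, $T_0$ is the identity on $\mbC$, so no kernel. For $k=1$, the target vanishes (the constraint $\sum n_i=0$ with $n_i\ne 0$ is impossible when $k=1$), while $\d_x\colon\cA_N^{(1,d-1)}\to\cA_N^{(1,d)}$ is an isomorphism for $d\ge 1$, leaving exactly $\mathrm{span}(\int u^\alpha dx)$ as the residue in $\Lambda_N$. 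For $k\ge 2$, $\d_x$ is injective on $\cA_N^{(k,d-1)}$ (its only kernel, $\mbC$, sits in $\cA_N^{(0,0)}$), and combined with the surjectivity from the previous paragraph, the desired equality $\ker T_0\cap\cA_N^{(k,d)}=\d_x(\cA_N^{(k,d-1)})$ reduces to the dimension identity
$$\dim\cA_N^{(k,d)}-\dim\cA_N^{(k,d-1)}=\dim\cB_N^{pol}\big|_{k,d}.$$
The hard part is verifying this identity --- equivalently, constructing an explicit normal form for $\Lambda_N^{(k,d)}$ via repeated integration by parts (for instance, concentrating all derivatives on one distinguished factor using $\d_x(u^{\alpha_1}_{j_1}\cdots u^{\alpha_k}_{j_k-1})=\sum_s u^{\alpha_1}_{j_1}\cdots u^{\alpha_s}_{j_s+1}\cdots u^{\alpha_k}_{j_k-1}+u^{\alpha_1}_{j_1}\cdots u^{\alpha_k}_{j_k}$) whose $T_0$-images give a basis of $\cB_N^{pol}|_{k,d}$. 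The symmetrization subtleties when several $\alpha_s$ coincide make this combinatorial check delicate.
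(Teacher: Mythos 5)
Your surjectivity argument and the forward kernel inclusion are correct and essentially identical to the paper's (the paper also builds preimages of the monomial generators $\sum_{\sum a_i=0,\,a_i\ne 0}\prod a_i^{d_i}\prod p^{\alpha_i}_{a_i}$ term by term). The problem is the reverse kernel inclusion, which you yourself flag as ``the hard part'': you reduce it to the dimension identity $\dim\cA_N^{(k,d)}-\dim\cA_N^{(k,d-1)}=\dim\cB_N^{pol}\big|_{k,d}$ and then stop. This is a genuine gap, and moreover the reduction is close to circular: the right-hand side is not computable without first determining which polynomials $P_{\alpha_1,\ldots,\alpha_k}$ represent the zero element of $\cB_N^{pol}$ (two polynomials give the same power series precisely when their difference vanishes on all integer points $(n_1,\ldots,n_k)$ with $n_i\ne 0$ and $\sum n_i=0$), and identifying that vanishing ideal is exactly the content of the kernel statement. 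So the ``delicate combinatorial check'' you defer is not a side computation --- it is the lemma.

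The missing idea, which is how the paper closes this step, is a divisibility argument rather than a dimension count. Take $\of=\int f\,dx$ in the bigraded piece with $l\ge 2$ factors and $T_0(\of)=0$, and write $T(f)=\sum P_{\alpha_1,\ldots,\alpha_l}(a_1,\ldots,a_l)\prod p^{\alpha_i}_{a_i}e^{ix\sum a_j}$. The hypothesis forces each symmetric homogeneous polynomial $P_{\alpha_1,\ldots,\alpha_l}$ to vanish on the set of integer tuples with nonzero, pairwise distinct coordinates summing to zero; since for $l\ge 2$ this set is Zariski dense in the hyperplane $z_1+\ldots+z_l=0$, one gets $P_{\alpha_1,\ldots,\alpha_l}=i(z_1+\ldots+z_l)\,Q_{\alpha_1,\ldots,\alpha_l}$. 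The quotient $Q$ is then realized (by the same construction as in your surjectivity step) as $T(g)$ for some differential polynomial $g$ with $\d_x g=f$, so $\of=0$. If you want to salvage your route instead, you would have to prove that the normal forms obtained by integrating all derivatives off one factor have linearly independent $T_0$-images, which again amounts to the same statement about the vanishing ideal; I recommend adopting the divisibility argument directly.
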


From this lemma it follows, that for any power series $f\in\cB_N^{pol}$ there exists a unique local functional $\oh\in\Lambda_N$ such that $T_0(\oh)=f$ and $\oh$ has the form $\int h dx$, where $\left.\der{h}{u^\alpha}\right|_{u^*_*=0}=0$. The local functional~$\oh$ will be denoted by $Q(f)$, so we have obtained a map $Q\colon\cB^{pol}_N\to\Lambda_N$.

\begin{lemma}\label{lemma:pullback}
The pullback of the bracket $\{\cdot,\cdot\}_\eta$ under the map $T_0$ is the bracket $\{\cdot,\cdot\}_{\eta\d_x}$.
\end{lemma}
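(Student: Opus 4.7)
The plan is to verify $\{T_0(\overline g),T_0(\overline h)\}_\eta=T_0(\{\overline g,\overline h\}_{\eta\d_x})$ by comparing both sides in Fourier components. Write $T(f)=\sum_m(\Coef_m T(f))\,e^{imx}$ for the formal Fourier expansion obtained from the substitution $u^\alpha_k\mapsto\sum_{n\ne 0}(in)^k p^\alpha_n e^{inx}$, so that $T_0(f)=\Coef_0 T(f)$. This substitution is a $\mbC$-algebra homomorphism and intertwines the formal operator $\d_x$ on $\cA_N$ with honest $x$-differentiation; in particular $\Coef_m(\d_x T(f))=im\,\Coef_m T(f)$. The crucial intermediate identity I would establish first is
\begin{equation}\label{eq:var-fourier}
\frac{\d T_0(\overline g)}{\d p^\alpha_m}=\Coef_{-m}T\!\left(\frac{\delta\overline g}{\delta u^\alpha}\right),\qquad m\ne 0.
\end{equation}

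To prove \eqref{eq:var-fourier}, I would differentiate the formula $T_0(g)=\Coef_0 T(g)$ using $\d u^\beta_k/\d p^\alpha_m=\delta^\beta_\alpha(im)^k e^{imx}$, obtaining $\d T_0(\overline g)/\d p^\alpha_m=\sum_k(im)^k\Coef_{-m}T(\d g/\d u^\alpha_k)$. On the other hand, since $T((-\d_x)^k\d g/\d u^\alpha_k)$ multiplies the $(-m)$-th Fourier coefficient of $T(\d g/\d u^\alpha_k)$ by $(-i(-m))^k=(im)^k$, summing over $k$ gives $\Coef_{-m}T(\delta\overline g/\delta u^\alpha)$ by the definition of the variational derivative, matching \eqref{eq:var-fourier}.

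With \eqref{eq:var-fourier} in hand, I would plug into the definition of $\{\cdot,\cdot\}_\eta$ and manipulate:
\begin{align*}
\{T_0(\overline g),T_0(\overline h)\}_\eta
&=\sum_{m\ne 0}\sum_{\alpha,\beta}im\,\eta^{\alpha,\beta}\,\Coef_{-m}T\!\left(\frac{\delta\overline g}{\delta u^\alpha}\right)\Coef_m T\!\left(\frac{\delta\overline h}{\delta u^\beta}\right)\\
&=\sum_{m\ne 0}\sum_{\alpha,\beta}\eta^{\alpha,\beta}\,\Coef_{-m}T\!\left(\frac{\delta\overline g}{\delta u^\alpha}\right)\Coef_m T\!\left(\d_x\frac{\delta\overline h}{\delta u^\beta}\right).
\end{align*}
The $m=0$ summand can be added for free since the zero Fourier mode of $T(\d_x\delta\overline h/\delta u^\beta)$ vanishes (it is a total $x$-derivative), and the completed sum $\sum_m\Coef_{-m}A\cdot\Coef_m B=\Coef_0(AB)$ is recognised as the zero Fourier coefficient of the product $T(\delta\overline g/\delta u^\alpha)\cdot T(\d_x\delta\overline h/\delta u^\beta)$. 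Using multiplicativity of $T$, this equals $T_0\!\left(\sum_{\alpha,\beta}\eta^{\alpha,\beta}\frac{\delta\overline g}{\delta u^\alpha}\d_x\frac{\delta\overline h}{\delta u^\beta}\right)=T_0(\{\overline g,\overline h\}_{\eta\d_x})$.

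The only delicate step is \eqref{eq:var-fourier}: once the chain-rule calculation for $\d/\d p^\alpha_m$ is matched against the iterated integration-by-parts hidden in the variational derivative, the remainder is Fourier bookkeeping, in which the $\d_x$ built into $\eta\d_x$ conveniently supplies the vanishing $m=0$ mode needed to close the sum into a product.
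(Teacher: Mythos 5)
Your proof is correct and takes essentially the same route as the paper: your key intermediate identity $\d T_0(\og)/\d p^\alpha_m=\Coef_{-m}T(\delta\og/\delta u^\alpha)$ for $m\ne 0$ is exactly the nonzero-mode content of the paper's Lemma~\ref{lemma:variational}, and the remaining Fourier bookkeeping is the paper's computation run in the reverse direction (you prove the identity by a chain-rule argument where the paper verifies it on monomials, but that is a cosmetic difference).
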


These lemmas are well-known (see e.g. \cite[Sections 2.2.3 and 2.9.2]{EGH00} and \cite{Ros10}), but we don't know a good reference with proofs. So we decided to give short proofs in Appendix~\ref{appendix}.


\subsection{Extended spaces}\label{subsection:extended spaces}

Introduce a formal indeterminate $\hbar$ of degree $\deg\hbar=-2$. Let $\hcA_{N}:=\cA_N\otimes\mbC[[\hbar]]$ and $\hcA^{[k]}_N\subset\hcA_N$ be the subspace of elements of the total degree $k\ge 0$. The space~$\hcA^{[k]}_N$ consists of elements of the form
$$
f(u;u_x,u_{xx},\ldots;\hbar)=\sum_{i\ge 0}\hbar^i f_i(u;u_x,\ldots),\quad f_i\in\cA_N,\quad\deg_{dif}f_i=2i+k.
$$
The elements of the space $\hcA^{[k]}_N$ will be also called differential polynomials.

Let $\hLambda_N:=\Lambda_N\otimes\mbC[[\hbar]]$ and $\hLambda^{[k]}_N$ be the subspace of elements of the total degree $k$. The space~$\hLambda^{[k]}_N$ consists of integrals of the form
\begin{gather*}
\overline f=\int f(u;u_x,u_{xx},\ldots;\hbar) dx,\quad f\in\hcA^{[k]}_N.
\end{gather*}
They will also be called local functionals. 

Let $K=(K^{\alpha,\beta})_{1\le\alpha,\beta\le N}$ be a matrix of differential operators 
\begin{gather}\label{eq: dif. operator}
K^{\alpha,\beta}=\sum_{i,j\ge 0}f^{\alpha,\beta}_{i,j}\hbar^i\d_x^j,
\end{gather}
where $f^{\alpha,\beta}_{i,j}\in\cA_N$ are homogeneous differential polynomials of the differential degree $\deg_{dif} f^{\alpha,\beta}_{i,j}=2i-j+1$. The bracket $\{\cdot,\cdot\}_K\colon\hLambda^{[k]}_N\times\hLambda^{[l]}_N\to\hLambda^{[k+l+1]}_N$ is defined by
\begin{gather*}
\{\overline g,\overline h\}_K:=\int\sum_{\alpha,\beta}\frac{\delta\overline g}{\delta u^\alpha}K^{\alpha,\beta}\frac{\delta\overline h}{\delta u^\beta}dx.
\end{gather*}
The operator $K$ is called hamiltonian, if the bracket $\{\cdot,\cdot\}_K$ is antisymmetric and satisfies the Jacobi identity. 

A hamiltonian system of partial differential equations is a system of the form
\begin{gather*}
\frac{\d u^\alpha}{\d \tau_i}=\sum_\beta K^{\alpha,\beta}\frac{\delta\oh_i}{\delta u^\beta},\quad 1\le\alpha\le N,\quad i\ge 1,
\end{gather*}
where $K=(K^{\alpha,\beta})$ is a hamiltonian operator and $\oh_i\in\hLambda^{[0]}_N$ are local functionals such that
$\{\oh_i,\oh_j\}_K=0$, for $i,j\ge 1$.

Let $\cB_N^{pol;d}\subset\cB_N^{pol}$ be the subspace that consists of power series of the form
$$
f=\sum_{k\ge 0}\sum_{\substack{1\le\alpha_1,\ldots,\alpha_k\le N\\n_1,\ldots,n_k\ne 0\\n_1+\ldots+n_k=0}}P_{\alpha_1,\ldots,\alpha_k}(n_1,\ldots,n_k)p^{\alpha_1}_{n_1}p^{\alpha_2}_{n_2}\ldots p^{\alpha_k}_{n_k},
$$
where $P_{\alpha_1,\ldots,\alpha_k}(z_1,\ldots,z_k)\in\mbC[z_1,\ldots,z_k]$ are homogeneous polynomials of degree $d$.

Consider the Poisson algebra $\hcB_N:=\cB_N\otimes\mbC[[\hbar]]$. Let $\hcB_N^{pol}\subset\hcB_N$ be the subspace that consists of power series of the form 
$$
f=\sum_{i\ge 0}\hbar^i f_i, \quad f_i\in\cB_N^{pol;2i}.
$$
Repeating the definitions from Section~\ref{subsection:Poisson algebra}, we obtain a map $\hLambda_N^{[0]}\to\hcB_N^{pol}$ that we denote by $\hT_0$. Lemmas~\ref{lemma:surjectivity} and~\ref{lemma:pullback} remain true for this map. Thus, we have a map $\hcB_N^{pol}\to\hLambda_N^{[0]}$ that we denote by $\hQ$.


\section{Geometric preliminaries: cohomological field theories and the double ramification cycles}\label{section:geometric preliminaries}

In this section we recall the definitions of cohomological field theories and the double ramification cycles. We also discuss the polynomiality property of the double ramification cycles that is crucial in our construction of the DR hierarchy.

\subsection{Cohomological field theory}

The notion of cohomological field theory was introduced by M.~Kontsevich and Yu.~Manin in~\cite{KM94}. Let $V$ be a finite dimensional vector space over~$\mbC$, it will be called the phase space. Let us fix a non-degenerate symmetric bilinear form (a scalar product) $(\cdot,\cdot)$ in $V$ and a vector~$\un\in V$, that will be called the unit. Let us denote by $H^*_{even}(\oM_{g,n};\mbC)$ the even part in the cohomology $H^*(\oM_{g,n};\mbC)$. A cohomological field theory is a collection of linear homomorphisms $c_{g,n}: V^{\otimes n}\to H^*_{even}(\oM_{g,n};\mbC)$ defined for all $g$ and $n$ and satisfying the following properties (axioms):

\begin{itemize}

\item $c_{g,n}$ is $S_n$-equivariant, where the group $S_n$ acts on $V^{\otimes n}$ by permutation of the factors, and where its action on $H^*(\oM_{g,n};\mbC)$ is induced by the mappings $\oM_{g,n}\to\oM_{g,n}$ defined by permutation of the marked points.

\item We have $(a,b)=c_{0,3}(\un\otimes a\otimes b)\in H^*(\oM_{0,3};\mbC)$, for all $a,b\in V$. 

\item If $\pi\colon\oM_{g,n+1}\to \oM_{g,n}$ is the forgetful map that forgets the last marked point, then
\begin{gather}\label{eq:forgetful property}
\pi^*c_{g,n}(a_1\otimes\ldots\otimes a_n)=c_{g,n+1}(a_1\otimes\ldots\otimes a_n\otimes\un).
\end{gather}

\item Let $\{e_i\}$ be a basis in $V$ and $\eta_{i,j}:=(e_i,e_j)$. 

\noindent a) If $gl\colon \oM_{g_1,n_1+1}\times\oM_{g_2,n_2+1}\to\oM_{g_1+g_2,n_1+n_2}$ is the gluing map, then
\begin{multline}\label{eq:gluing equation}
gl^*c_{g_1+g_2,n_1+n_2}(a_1\otimes\ldots\otimes a_n)=\\
=\sum_{i,j}c_{g_1,n_1+1}(a_1\otimes\ldots\otimes a_{n_1}\otimes e_i)\cdot c_{g_2,n_2+1}(a_{n_1+1}\otimes\ldots\otimes a_{n_1+n_2}\otimes e_j)\eta^{i,j}.
\end{multline}

\noindent b) If $gl\colon\oM_{g-1,n+2}\to\oM_{g,n}$ is the gluing map, then
$$
gl^*c_{g,n}(a_1\otimes\ldots\otimes a_n)=\sum_{i,j}c_{g-1,n+2}(a_1\otimes\ldots\otimes a_n\otimes e_i\otimes e_j)\eta^{i,j}.
$$

\end{itemize}

The correlators of the cohomological field theory are defined as follows. The class $\psi_i\in H^2(\oM_{g,n};\mbC)$ is defined as the first Chern class of the line bundle over $\oM_{g,n}$ formed by the cotangent lines at the $i$-th marked point (see e.g.~\cite{ACG11} for a rigorous definition). For arbitrary vectors $v_1,v_2,\ldots,v_n\in V$ and any nonnegative integers $d_1,d_2,\ldots,d_n$, let
$$
\<\tau_{d_1}(v_1)\tau_{d_2}(v_2)\ldots\tau_{d_n}(v_n)\>_g:=\int_{\oM_{g,n}} c_{g,n}(v_1\otimes\ldots\otimes v_n)\prod_{i=1}^n\psi_i^{d_i}.
$$

Let us choose a basis~$e_1,\ldots,e_N$ in the phase space $V$. Introduce variables $t^\alpha_d$, where $1\le\alpha\le N$ and $d\ge 0$. Define the potential $F$ of the cohomological field theory by 
\begin{align*}
&F:=\sum_{g\ge 0}\hbar^g F_g,\text{ where}\\
&F_g:=\sum_{\substack{n\ge 0\\2g-2+n>0}}\frac{1}{n!}\sum_{\substack{1\le\alpha_1,\ldots,\alpha_n\le N\\d_1,\ldots,d_n\ge 0}}\<\tau_{d_1}(e_{\alpha_1})\tau_{d_2}(e_{\alpha_2})\ldots\tau_{d_n}(e_{\alpha_n})\>_g\prod_{i=1}^n t_{d_i}^{\alpha_i}.
\end{align*}

Assume that $e_1$ is the unit $\un$. Recall the string equation:
\begin{gather}\label{eq:string}
\frac{\d F}{\d t^1_0}=\sum_{\substack{1\le \alpha\le N\\p\ge 0}}t^\alpha_{p+1}\frac{\d F}{\d t^\alpha_p}+\frac{1}{2}\sum_{\alpha,\beta}\eta_{\alpha,\beta}t^\alpha_0 t^\beta_0+\hbar\left<\tau_0(e_1)\right>_1.
\end{gather}
It is a simple consequence of~\eqref{eq:forgetful property} and the formula for the pull-back of $\psi_i$ under the forgetful map $\pi\colon\oM_{g,n+1}\to\oM_{g,n}$ (see e.g.~\cite{ACG11}).

\subsubsection{Examples}

Let us give two simple examples of a cohomological field theory. Let $V$ be of dimension $1$ with a basis vector $e$. Define a scalar product by $(e,e)=1$ and let $e$ be a unit~$\un$. 

The trivial cohomological field theory is given by the formula 
$$
c^{triv}_{g,n}(e\otimes\ldots\otimes e):=1\in H^*(\oM_{g,n};\mbC).
$$

We will also consider the cohomological field theory formed by the Hodge classes:
$$
c^{Hodge}_{g,n}(e\otimes\ldots\otimes e):=1+\eps\lambda_1+\ldots+\eps^g\lambda_g\in H^*(\oM_{g,n};\mbC),
$$
where $\lambda_j\in H^{2j}(\oM_{g,n};\mbC)$ is the $j$-th Chern class of the rank $g$ Hodge vector bundle over $\oM_{g,n}$ whose fibers are the spaces of holomorphic one-forms. The fact, that the classes $1+\eps\lambda_1+\ldots+\eps^g\lambda_g$ form a cohomological field theory, was first noticed in~\cite{Mum83}.

For these two cohomological field theories we will explicitly compute the DR hierarchy and compare it with the Dubrovin-Zhang hierarchy.


\subsection{Double ramification cycles}

In this section we recall the definition of the double ramification cycles and formulate the polynomiality property that is crucial in our construction of the DR hierarchy.

\subsubsection{Definition}

Let $a_1,\ldots,a_n$ be a list of integers satisfying $\sum a_i=0$ and assume that not all of them are equal to zero. To such a list we assign a space of ``rubber'' stable maps to~$\mbP^1$ relative to $0$ and~$\infty$ in the following way.

Denote by $n_+$ the number of positive integers among the $a_i$'s. They form a partition $\nu = (\nu_1,\dots,\nu_{n_+})$. Similarly, denote by $n_-$ the number of negative integers among the $a_i$'s. After a change of sign they form another partition $\mu =(\mu_1,\ldots,\mu_{n_-})$. Both $\mu$ and $\nu$ are partitions of the same integer $d:=\frac{1}{2} \sum_{i=1}^n |a_i|$. Finally, let $n_0$ be the number of vanishing $a_i$'s. 

To the list $a_1,\ldots,a_n$ we assign the space 
$$
\oM_{g;a_1,\dots,a_n}:=\oM_{g,n_0;\mu,\nu}(\mbP^1,0,\infty)
$$ 
of degree~$d$ ``rubber'' stable maps to $\mbP^1$ relative to $0$ and $\infty$ with the ramification profiles $\mu$ and $\nu$ respectively (see e.g.~\cite{GV05}). Here ``rubber'' means that we factor the space by the $\mbC^*$-action in the target~$\mbP^1$. We consider the pre-images of $0$ and $\infty$ as marked points and there are $n_0$ more additional marked points.

Thus, in the source curve there are $n$ numbered marked points with labels $a_1,\dots,a_n$. The relative stable map sends the points with positive labels to $\infty$, those with negative labels to~$0$, while those with zero labels do not have a fixed image.

We have the forgetful map $st\colon\oM_{g;a_1,\ldots,a_n}\to\oM_{g,n}$. The space space $\oM_{g;a_1,\ldots,a_n}$ has the virtual fundamental class $[\oM_{g;a_1,\ldots,a_n}]^\virt\in H_{2(2g-3+n)}(\oM_{g;a_1,\ldots,a_n};\mbC)$.

\begin{definition}
The push-forward $st_*[\oM_{g;a_1,\ldots,a_n}]^\virt\in H_{2(2g-3+n)}(\oM_{g,n};\mbC)$ of the virtual fundamental class under the forgetful map~$st$ is called the double ramification cycle or the DR-cycle and is denoted by $\DR_g(a_1,\ldots,a_n)$. 
\end{definition}
These classes were introduced by T. Graber and R. Vakil in~\cite{GV05}. It is known (see~\cite{FP05}) that the Poincar\'e dual cohomology class of $\DR_g(a_1, \dots, a_n)$ lies in the tautological ring of~$\oM_{g,n}$. 

Let $\pi\colon\oM_{g,n+1}\to\oM_{g,n}$ be the forgetful map that forgets the last marked point. As an immediate consequence of the definition, we get the following property:
$$
\pi^* DR_g(a_1,\ldots,a_n)=DR_g(a_1,\ldots,a_n,0).
$$

In genus $0$ the double ramification cycle coincides with the fundamental class of the moduli space of curves (see e.g.~\cite{GJV11}):
$$
DR_0(a_1,\ldots,a_n)=[\oM_{0,n}].
$$

\subsubsection{Polynomiality}

There is another version of the double ramification cycles defined using the universal Jacobian over the moduli space $\M_{g,n}^{ct}$ of stable curves of compact type. This is a class in the Borel-Moore homology $H^{BM}_{2(2g-3+n)}(\M_{g,n}^{ct};\mbC)$ and we will denote it by $DR_g^{Jac}(a_1,\ldots,a_n)$. R. Hain~\cite{Hain13} obtained an explicit formula for this class that we are going to recall here (see also~\cite{GZ12} for a different proof). 

Let $\delta^J_h$ be the class of the divisor whose generic point is a reducible curve consisting of a smooth component of genus $h$ containing the marked points indexed by $J$ and a smooth component of genus $g-h$ with the remaining points, joined at a node. Denote by $\psi^\dagger_i$ the $\psi$-class that is pulled back from $\M^{ct}_{g,1}$.

Now we can state Hain's result:
\begin{gather}
DR_g^{Jac}(a_1,\ldots,a_n)=\frac{1}{g!}\left(\sum_{j=1}^n\frac{a_j^2\psi_j^\dagger}{2}-\sum_{\substack{J\subset\{1,\ldots,n\}\\|J|\ge 2}}\left(\sum_{i,j\in J,i<j}a_ia_j\right)\delta_0^J-\frac{1}{4}\sum_{J\subset\{1,\ldots,n\}}\sum_{h=1}^{g-1}a_J^2\delta_h^J\right)^g,\label{eq:Hain}
\end{gather}
where $a_J:=\sum_{j\in J}a_j$.

Happily, in \cite{MW13} it is proved that$\left.DR_g(a_1,\ldots,a_n)\right|_{\M_{g,n}^{ct}}=DR_g^{Jac}(a_1,\ldots,a_n)$. Since the class $\lambda_g$ vanishes on $\oM_{g,n}\setminus\M_{g,n}^{ct}$, we get the following lemma that is crucial in our construction.
\begin{lemma}\label{lemma:polynomialty}
For any cohomology class $\alpha\in H^*(\oM_{g,n};\mbC)$ the integral $\int_{DR_g(a_1,\ldots,a_n)}\lambda_g\alpha$ is a homogeneous polynomial in $a_1,\ldots,a_n$ of degree $2g$.
\end{lemma}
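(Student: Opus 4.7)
The plan is to assemble three ingredients already recalled in the text: Mumford's vanishing of $\lambda_g$ off the compact type locus, the Marcus-Wise identification of $DR_g$ with $DR_g^{Jac}$ on $\M_{g,n}^{ct}$, and Hain's explicit formula \eqref{eq:Hain}. Once these are in hand, the polynomiality and degree statement follow by direct inspection.

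First, I would reduce the integral to the compact type locus. Since the class $\lambda_g$ vanishes on $\oM_{g,n}\setminus\M_{g,n}^{ct}$, the product $\lambda_g\alpha$ lifts to a class in the relative cohomology $H^*(\oM_{g,n},\oM_{g,n}\setminus\M_{g,n}^{ct};\mbC)$, and hence its pairing with $DR_g(a_1,\ldots,a_n)$ depends only on the image of this homology class under the restriction map to $H^{BM}_*(\M_{g,n}^{ct};\mbC)$. By the Marcus-Wise theorem cited above, that restriction is precisely $DR_g^{Jac}(a_1,\ldots,a_n)$. Consequently
$$
\int_{DR_g(a_1,\ldots,a_n)}\lambda_g\alpha=\int_{DR_g^{Jac}(a_1,\ldots,a_n)}\lambda_g\alpha,
$$
where on the right the integrand is understood via its natural restriction to $\M_{g,n}^{ct}$.

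Second, I would read off the degree from Hain's formula. Inside the outer bracket in \eqref{eq:Hain}, each of the three types of summand, namely $a_j^2\psi_j^\dagger$, $a_ia_j\,\delta_0^J$ and $a_J^2\,\delta_h^J$, is a homogeneous polynomial of degree $2$ in the variables $a_1,\ldots,a_n$ with coefficients in $H^*(\M_{g,n}^{ct};\mbC)$. Raising this degree-$2$ expression to the $g$-th power produces a cohomology class whose dependence on $(a_1,\ldots,a_n)$ is a homogeneous polynomial of degree $2g$ (the factor $1/g!$ is independent of the $a_i$).

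Finally, pairing this polynomial-valued class with the fixed class $\lambda_g\alpha$, which does not depend on the $a_i$, yields a complex number that is a homogeneous polynomial of degree $2g$ in $a_1,\ldots,a_n$. The only genuinely non-trivial ingredient is the compact type reduction in the first step, and even there the combination of Mumford's vanishing with Marcus-Wise makes the argument routine; the remainder is simply inspection of Hain's formula.
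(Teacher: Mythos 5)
Your proof is correct and follows exactly the route the paper takes: $\lambda_g$ vanishes off the compact type locus, Marcus--Wise identifies $DR_g$ with $DR_g^{Jac}$ there, and Hain's formula~\eqref{eq:Hain} exhibits the latter as $\frac{1}{g!}$ times the $g$-th power of a class depending quadratically on the $a_i$, whence homogeneity of degree $2g$ after pairing with the fixed class $\lambda_g\alpha$. The paper states this in a single sentence preceding the lemma; your write-up just makes the same argument explicit.
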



\section{Main construction}\label{section:construction}

In this section we give the construction of the DR hierarchy. We discuss its main properties and explicitly compute it in two examples. 

Consider an arbitrary cohomological field theory $c_{g,n}\colon V^{\otimes n}\to H^*_{even}(\oM_{g,n};\mbC)$. Let $V$ be the phase space and $\dim V=N$. Let $e_1,e_2,\ldots,e_N$ be a basis of $V$ such that $e_1$ is the unit. Denote by $\eta=(\eta_{\alpha,\beta})$, $\eta_{\alpha,\beta}:=(e_\alpha,e_\beta)$, the matrix of the scalar product. 

\subsection{Hamiltonians}\label{subsection:Hamiltonians}

For any $1\le\alpha\le N$ and $d\ge 0$, define the power series $g_{\alpha,d}\in\hcB_N$ by
\begin{gather}\label{eq:definition of g}
g_{\alpha,d}=\sum_{g\ge 0}\sum_{n\ge 2}\frac{(-\hbar)^g}{n!}\sum_{\substack{a_1,\ldots,a_n\ne 0\\a_1+\ldots+a_n=0\\1\le\alpha_1,\ldots,\alpha_n\le N}}\left(\int_{DR_g(0,a_1,\ldots,a_n)}\lambda_g\psi_1^d c_{g,n+1}(e_\alpha\otimes e_{\alpha_1}\otimes\ldots\otimes e_{\alpha_n})\right)\prod_{i=1}^n p_{a_i}^{\alpha_i}.
\end{gather}
From Lemma~\ref{lemma:polynomialty} it follows that $g_{\alpha,d}\in\hcB_N^{pol}$. Define the local functional $\overline g_{\alpha,d}\in\hLambda^{[0]}_N$ by
$$
\overline g_{\alpha,d}:=\hQ(g_{\alpha,d}).
$$

The following theorem is the main result of the paper.
\begin{theorem}\label{theorem:main}
For any $1\le\alpha_1,\alpha_2\le N$ and $d_1,d_2\ge 0$, we have
$$
\{\overline g_{\alpha_1,d_1},\overline g_{\alpha_2,d_2}\}_{\eta\d_x}=0.
$$
\end{theorem}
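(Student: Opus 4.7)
The plan is to transfer the commutativity from $\hLambda^{[0]}_N$ to the Poisson algebra $\hcB_N$, where the bracket is explicit in the Fourier modes $p^\alpha_n$, and then to check the vanishing there. By the $\hbar$-extension of Lemma~\ref{lemma:pullback} recorded at the end of Section~\ref{subsection:extended spaces},
$$
\hT_0\{\og_{\alpha_1,d_1},\og_{\alpha_2,d_2}\}_{\eta\d_x}=\{g_{\alpha_1,d_1},g_{\alpha_2,d_2}\}_\eta,
$$
and since $\ker\hT_0$ is spanned by the local functionals $\int u^\alpha dx$, an inspection of the $p$-expansion of the right-hand side shows that vanishing in $\hcB_N$ implies vanishing in $\hLambda^{[0]}_N$. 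Thus it is enough to prove $\{g_{\alpha_1,d_1},g_{\alpha_2,d_2}\}_\eta=0$ in $\hcB_N$.

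Using $\{p^\mu_m,p^\nu_n\}_\eta=im\eta^{\mu,\nu}\delta_{m+n,0}$ together with the Leibniz rule, the bracket becomes
$$
\{g_{\alpha_1,d_1},g_{\alpha_2,d_2}\}_\eta=\sum_{m\ne 0}\sum_{\mu,\nu}im\,\eta^{\mu,\nu}\frac{\d g_{\alpha_1,d_1}}{\d p^\mu_m}\frac{\d g_{\alpha_2,d_2}}{\d p^\nu_{-m}}.
$$
Differentiating the definition~\eqref{eq:definition of g} with respect to $p^\mu_m$ distinguishes, inside each underlying DR-integral, one marked point of weight $m$ and colour $e_\mu$. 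Contracting the two factors with $\eta^{\mu,\nu}$ and applying the gluing axiom~\eqref{eq:gluing equation} rewrites the product of two DR-integrals as a single integral over $\oM_{g_1+g_2,n_1+n_2+2}$, pulled back along a boundary gluing map $\oM_{g_1,n_1+2}\times\oM_{g_2,n_2+2}\to\oM_{g_1+g_2,n_1+n_2+2}$, with the two distinguished marked points carrying $\psi_1^{d_1}$ and $\psi_2^{d_2}$.

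The heart of the argument will then be a splitting-type identity for $\lambda_g\cdot\DR_g$ on the compact-type locus. Since $\lambda_{g_1}\lambda_{g_2}$ kills everything outside $\M^{ct}_{g_1+g_2,n_1+n_2+2}$, the product of DR-cycles can be replaced by the product of Hain's Jacobian classes $\DR^{Jac}_{g_i}$, for which formula~\eqref{eq:Hain} is explicit. Combined with Lemma~\ref{lemma:polynomialty}, which licenses treating the weights $a_i$ and $m$ as formal variables, the infinite sum over $m\in\mbZ_{\ne 0}$ collapses to a finite polynomial identity in which the prefactor $im$ absorbs one of the $a_j^2$-type factors appearing in Hain's formula. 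Antisymmetrising the computation in $(\alpha_1,d_1)\leftrightarrow(\alpha_2,d_2)$, and using the symmetry $m\mapsto -m$ of the remaining polynomial data coming from Hain's expression, should then force the total expression to vanish.

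The main obstacle I anticipate is this last matching step. One must carefully track both the separating boundary contributions produced by the $\eta$-contraction between the two $g_{\alpha_i,d_i}$ and the non-separating (``loop'') contributions arising from self-contractions inside a single factor, and then compare them term-by-term with the boundary expansion of Hain's polynomial, taking into account the $\psi_i$ versus $\psi^\dagger_i$ discrepancy on the compact-type locus. Lemma~\ref{lemma:polynomialty} reduces the whole statement to an identity of polynomials in $a_1,\ldots,a_n$, but exhibiting the cancellation explicitly, together with the required polynomiality-based manipulations that convert the $im$ prefactor into an $a$-derivative, is where the real combinatorial and geometric work lies.
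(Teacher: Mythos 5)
Your reduction to the Poisson algebra side is exactly right and matches the paper: by the $\hbar$-extensions of Lemmas~\ref{lemma:pullback} and~\ref{lemma:surjectivity} it suffices to show $\{g_{\alpha_1,d_1},g_{\alpha_2,d_2}\}_\eta=0$ in $\hcB_N$ (plus a degree argument to kill the potential $\int u^\beta dx$ ambiguity, since the bracket lives in $\hLambda_N^{[1]}$ while $\int u^\beta dx$ has degree $0$ --- worth stating explicitly). You also correctly identify that the bracket is a sum over single $\eta$-contractions, that the gluing axiom~\eqref{eq:gluing equation} turns each product of two DR-integrals into an integral over a one-nodal boundary stratum, and that $\lambda_g$ confines everything to the compact-type locus. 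Your worry about ``self-contractions inside a single factor'' is unfounded: the bracket is a derivation in each argument, so only cross-contractions occur; the multi-node ($p\ge 2$) terms enter on the geometric side and are killed precisely because $\lambda_{g}$ vanishes on curves whose dual graph has a loop.

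The genuine gap is the cancellation itself, which you leave as an unproven ``splitting-type identity'' to be extracted from Hain's formula. This is the heart of the theorem, and the paper does not obtain it from~\eqref{eq:Hain} at all. Instead it uses the second forgetful map $q\colon\oM_{g;0,0,a_1,\ldots,a_n}\to\LM_{r+2}/S_r$ to the Losev--Manin space, the linear equivalence $D^{Sym}_{(0,x_1|x_2,\infty)}=D^{Sym}_{(0,x_2|x_1,\infty)}$ of divisors there (from~\cite{LM00}), and the splitting formula~\eqref{eq:splitting formula} of~\cite{BSSZ12} computing $st_*q^*D^{Sym}_{(0,x_1|x_2,\infty)}$ as a weighted sum of glued DR-cycles $\DR_{g_1}(0_{p_1},a_I,k_1,\ldots,k_p)\boxtimes\DR_{g_2}(-k_1,\ldots,-k_p,0_{p_2},a_J)$ with coefficient $\prod k_i/p!$; after capping with $\lambda_g$ only $p=1$ survives and the factor $k$ matches the $im$ in the Poisson bracket, so the difference of the two divisor pullbacks is exactly $-i\{g_{\alpha_1,d_1},g_{\alpha_2,d_2}\}_\eta$. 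Without this input (or an honest proof of your polynomial identity, which would amount to re-deriving a special case of it), the argument does not close: Hain's formula expresses $\DR_g^{Jac}$ as a $g$-th power of a divisor class, and there is no straightforward way to read off from it the antisymmetry of the sum of one-nodal glued products in $(\alpha_1,d_1)\leftrightarrow(\alpha_2,d_2)$. Lemma~\ref{lemma:polynomialty} guarantees polynomiality of the individual integrands but does not by itself produce the required identity.
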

We prove the theorem in Section~\ref{section:commutativity}. The hamiltonian hierarchy of PDEs corresponding to the operator $\eta\d_x$ and the local functionals $\og_{\alpha,d}$ is called the DR hierarchy:
\begin{gather}\label{eq: DR hierarchy}
\der{u^\beta}{t^\alpha_d}=\sum_\mu\eta^{\beta,\mu}\d_x\frac{\delta\og_{\alpha,d}}{\delta u^\mu}.
\end{gather}

\begin{remark}
Our construction of the DR hierarchy is very similar to the construction of the quantum hierarchy in Symplectic Field Theory~\cite{EGH00}. The crucial difference is the presence of the class $\lambda_g$ in the integrand on the right-hand side of~\eqref{eq:definition of g}. This allows us to construct a classical hierarchy instead of the quantum hierarchy from SFT.
\end{remark}


\subsection{Main properties}

In this section we list several properties of the DR hierarchy.

\subsubsection{Hamiltonian $\og_{1,0}$}\label{subsection:dx flow}

\begin{lemma}\label{lemma:dx-flow}
We have $\og_{1,0}=\frac{1}{2}\int\left(\sum_{\alpha,\beta}\eta_{\alpha,\beta}u^\alpha u^\beta\right)dx$.
\end{lemma}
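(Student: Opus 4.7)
The plan is to evaluate $g_{1,0}\in\hcB_N^{pol}$ directly from the definition and then invert the map $\hQ$.

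First I would argue that only the $(g,n)=(0,2)$ summand in the definition \eqref{eq:definition of g} survives. For any other $(g,n)$ with $n\ge2$, the marked point labelled $0$ in $DR_g(0,a_1,\dots,a_n)$ is ``free'', and all three pieces of the integrand behave well under forgetting it: by the unit axiom \eqref{eq:forgetful property}, after using $S_{n+1}$-equivariance to move the $e_1$-insertion to the last position, $c_{g,n+1}(e_1\otimes e_{\alpha_1}\otimes\cdots\otimes e_{\alpha_n})=\pi^*c_{g,n}(e_{\alpha_1}\otimes\cdots\otimes e_{\alpha_n})$; by the forgetful property of DR-cycles recalled just before Lemma~\ref{lemma:polynomialty}, $DR_g(a_1,\dots,a_n,0)=\pi^*DR_g(a_1,\dots,a_n)$; and $\lambda_g$ is pulled back because the Hodge bundle is pulled back from $\oM_g$ (and trivially so when $g=0$ since $\lambda_0=1$). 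Hence the integrand is of the form $\pi^*(\lambda_g\,c_{g,n}(\dots))$ integrated against $\pi^*DR_g(a_1,\dots,a_n)$, and by the projection formula this equals $\int_{DR_g(a_1,\dots,a_n)}\pi_*(1)\cdot\lambda_g c_{g,n}(\dots)$. Since $\pi\colon\oM_{g,n+1}\to\oM_{g,n}$ has one-dimensional fibres, $\pi_*(1)\in H^{-2}(\oM_{g,n})=0$, so the contribution vanishes. This argument requires $\oM_{g,n}$ to exist as a stable moduli space, which excludes exactly $(g,n)=(0,2)$.

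Next I would compute the surviving $(g,n)=(0,2)$ term. Here $DR_0(0,a,-a)=[\oM_{0,3}]$, $\lambda_0=1$, and the CohFT axiom $(a,b)=c_{0,3}(\un\otimes a\otimes b)$ gives $c_{0,3}(e_1\otimes e_{\alpha_1}\otimes e_{\alpha_2})=\eta_{\alpha_1,\alpha_2}$. Collecting these, the definition reduces to
\begin{equation*}
g_{1,0}=\frac{1}{2}\sum_{\alpha,\beta}\eta_{\alpha,\beta}\sum_{a\ne0}p^{\alpha}_{a}p^{\beta}_{-a}.
\end{equation*}

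Finally I would identify the unique $\hQ$-preimage. Setting $\oh:=\tfrac12\int\bigl(\sum_{\alpha,\beta}\eta_{\alpha,\beta}u^\alpha u^\beta\bigr)dx$, the substitution $u^\alpha\mapsto\sum_{n\ne0}p^\alpha_n e^{inx}$ produces, after extracting the constant Fourier mode, exactly the expression above for $g_{1,0}$, so $\hT_0(\oh)=g_{1,0}$. The density $h=\tfrac12\sum_{\alpha,\beta}\eta_{\alpha,\beta}u^\alpha u^\beta$ satisfies $\partial h/\partial u^\gamma|_{u^{*}_{*}=0}=\sum_\alpha\eta_{\alpha,\gamma}u^\alpha|_{u=0}=0$, i.e.\ it meets the normalisation condition singling out $\hQ$, so $\og_{1,0}=\hQ(g_{1,0})=\oh$ as claimed.

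I do not anticipate any real obstacle: the whole lemma is a bookkeeping statement once the three forgetful properties are combined. The only subtle point is the dimension/projection-formula argument showing that every integral with $(g,n)\ne(0,2)$ vanishes, and the reason this still allows a nonzero answer is precisely that $\oM_{0,2}$ is not a moduli space, so the forgetful map $\pi$ does not exist for the exceptional term.
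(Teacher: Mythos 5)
Your proposal is correct and follows essentially the same route as the paper: kill every term with $2g-2+n>0$ by observing that cycle, CohFT class and $\lambda_g$ are all pulled back under the forgetful map (the paper states this more tersely, but the underlying dimension count is the same as your $\pi_*(1)=0$ argument), then evaluate the surviving $(g,n)=(0,2)$ term via $c_{0,3}(\un\otimes e_\alpha\otimes e_\beta)=\eta_{\alpha,\beta}$ and identify the $\hQ$-preimage. Your explicit check of the normalisation condition for $\hQ$ is a small addition the paper leaves implicit, but nothing of substance differs.
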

Therefore, the first equation of the DR hierarchy is
$$
\frac{\d u^\alpha}{\d t^1_0}=u^\alpha_x.
$$
\begin{proof}[Proof of Lemma~\ref{lemma:dx-flow}]
Suppose $2g-2+n>0$ and let $\pi\colon\oM_{g,n+1}\to\oM_{g,n}$ be the forgetful map that forgets the first marked point. We have
\begin{align*}
&DR_g(0,a_1,\ldots,a_n)=\pi^* DR_g(a_1,\ldots,a_n),\\
&c_{g,n+1}(e_1\otimes e_{\alpha_1}\otimes\ldots\otimes e_{\alpha_n})=\pi^* c_{g,n}(e_{\alpha_1}\otimes\ldots\otimes e_{\alpha_n}).
\end{align*} 
Hence, $\int_{DR_g(0,a_1,\ldots,a_n)}\lambda_g c_{g,n+1}(e_1\otimes e_{\alpha_1}\otimes\ldots\otimes e_{\alpha_n})=0$. 

In the case $g=0$ and $n=2$ we have
\begin{align*}
&\int_{DR_0(0,a,-a)}c_{0,3}(e_1\otimes e_\alpha\otimes e_\beta)=\eta_{\alpha,\beta}, \text{ therefore,}\\
&g_{1,0}=\frac{1}{2}\sum_{\alpha,\beta}\sum_{a\ne 0}\eta_{\alpha,\beta}p^\alpha_a p^\beta_{-a}=\frac{1}{2}\hT_0\left(\int\sum_{\alpha,\beta}\eta_{\alpha,\beta}u^\alpha u^\beta dx\right).
\end{align*}
This completes the proof of the lemma.
\end{proof}

\subsubsection{Genus $0$ part}\label{subsection:genus 0}

Here we compute the genus $0$ part of the DR hierarchy and compare it with the genus $0$ part of the Dubrovin-Zhang hierarchy. Let $\og^{[0]}_{\alpha,d}:=\left.\og_{\alpha,d}\right|_{\hbar=0}$ and  
$$
\Omega^{[0]}_{\alpha,p;\beta,q}(u):=\left.\frac{\d^2 F_0}{\d t^\alpha_p\d t^\beta_q}\right|_{\substack{t^*_{\ge 1}=0\\t_0^\mu=u^\mu}}.
$$

\begin{lemma}
We have $\og^{[0]}_{\alpha,d}=\int\Omega^{[0]}_{\alpha,d+1;1,0} dx$.
\end{lemma}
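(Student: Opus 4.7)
My plan is to compute both sides explicitly as local functionals and match them using the genus zero string equation.

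For the left-hand side, I restrict \eqref{eq:definition of g} to $g=0$. In genus zero, $DR_0(0,a_1,\ldots,a_n)=[\oM_{0,n+1}]$ and $\lambda_0=1$, so
$$
\int_{DR_0(0,a_1,\ldots,a_n)}\lambda_0\psi_1^d c_{0,n+1}(e_\alpha\otimes e_{\alpha_1}\otimes\cdots\otimes e_{\alpha_n})=\<\tau_d(e_\alpha)\tau_0(e_{\alpha_1})\cdots\tau_0(e_{\alpha_n})\>_0,
$$
which is independent of the $a_i$'s. Hence the $p$-coefficient of each correlator is $\sum_{a_1,\ldots,a_n\ne 0,\sum a_i=0}p^{\alpha_1}_{a_1}\cdots p^{\alpha_n}_{a_n}=\hT_0\bigl(\int u^{\alpha_1}\cdots u^{\alpha_n}dx\bigr)$. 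Since each term in the candidate primitive $\int \sum_n\frac{1}{n!}\sum\<\tau_d(e_\alpha)\prod\tau_0(e_{\alpha_i})\>_0 u^{\alpha_1}\cdots u^{\alpha_n}dx$ is quadratic or higher in the $u$'s (as $n\ge 2$), its $u$-derivative vanishes at $u=0$, so this is precisely $\hQ$ applied to $g_{\alpha,d}^{[0]}$:
$$
\og^{[0]}_{\alpha,d}=\sum_{n\ge 2}\frac{1}{n!}\sum_{\vec\alpha}\<\tau_d(e_\alpha)\prod_{i=1}^n\tau_0(e_{\alpha_i})\>_0\int u^{\alpha_1}\cdots u^{\alpha_n}\,dx.
$$

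For the right-hand side, I differentiate the definition of $F_0$ twice and set $t^*_{\ge 1}=0$, $t^\mu_0=u^\mu$:
$$
\Omega^{[0]}_{\alpha,d+1;1,0}(u)=\sum_{k\ge 1}\frac{1}{k!}\sum_{\vec\alpha}\<\tau_{d+1}(e_\alpha)\tau_0(e_1)\prod_{i=1}^k\tau_0(e_{\alpha_i})\>_0\prod_{i=1}^k u^{\alpha_i}.
$$
Now I apply the genus zero string equation (the coefficient form of \eqref{eq:string}): for $k\ge 2$ the extra terms produced by the string equation involve $\tau_{-1}$ and vanish, leaving
$$
\<\tau_{d+1}(e_\alpha)\tau_0(e_1)\prod_{i=1}^k\tau_0(e_{\alpha_i})\>_0=\<\tau_d(e_\alpha)\prod_{i=1}^k\tau_0(e_{\alpha_i})\>_0,
$$
which matches the LHS term by term after integrating in $x$. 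The only edge case is $k=1$: there the RHS correlator is an integral of $\psi_1^{d+1}$ over $\oM_{0,3}$ (with $d+1\ge 1$) and hence vanishes for dimensional reasons, while the LHS has no $k=1$ contribution since it starts at $n\ge 2$. This completes the identification.

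The substantive input is the string equation; the rest is bookkeeping with the definitions of $\hQ$ and $\Omega^{[0]}$. The only mild subtlety is matching the summation ranges ($n\ge 2$ on the left, $k\ge 1$ on the right) and handling the $k=1$ boundary term, both of which are settled by $\oM_{0,3}$ being a point.
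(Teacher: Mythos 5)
Your proof is correct and follows essentially the same route as the paper: restrict to genus zero where $DR_0(0,a_1,\ldots,a_n)=[\oM_{0,n+1}]$ turns the integrals into correlators, identify the resulting power series as $\hT_0$ of a local functional built from $F_0$, and invoke the string equation to replace $\Omega^{[0]}_{\alpha,d}$ by $\Omega^{[0]}_{\alpha,d+1;1,0}$. The only difference is that you apply the string equation coefficient-by-coefficient (with the $k=1$ boundary case handled via $\oM_{0,3}$ being a point) rather than at the level of generating functions, which is a purely cosmetic variation.
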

\begin{proof}
We have
\begin{multline*}
\og^{[0]}_{\alpha,d}=\sum_{n\ge 2}\frac{1}{n!}\sum_{\substack{a_1,\ldots,a_n\ne 0\\a_1+\ldots+a_n=0\\1\le\alpha_1,\ldots,\alpha_n\le N}}\left(\int_{DR_0(0,a_1,\ldots,a_n)}\psi_1^d c_{0,n+1}(e_\alpha\otimes e_{\alpha_1}\otimes\ldots\otimes e_{\alpha_n})\right)\prod_{i=1}^n
p^{\alpha_i}_{a_i}=\\
=\sum_{n\ge 2}\frac{1}{n!}\sum_{\substack{a_1,\ldots,a_n\ne 0\\a_1+\ldots+a_n=0\\1\le\alpha_1,\ldots,\alpha_n\le N}}\<\tau_d(e_\alpha)\tau_0(e_{\alpha_1})\ldots\tau_0(e_{\alpha_n})\>_0\prod_{i=1}^np^{\alpha_i}_{a_i}=\hT_0\left(\int\Omega^{[0]}_{\alpha,d}(u)dx\right),
\end{multline*}
where $\Omega^{[0]}_{\alpha,d}(u):=\left.\frac{\d F_0}{\d t^\alpha_d}\right|_{\substack{t^*_{\ge 1}=0\\t_0^\mu=u^\mu}}$. From the string equation~\eqref{eq:string} it follows that $\Omega^{[0]}_{\alpha,d}=\Omega^{[0]}_{\alpha,d+1;1,0}$. The lemma is proved.
\end{proof}

Suppose our cohomological field theory is semisimple. Let $\oh_{\alpha,d}^{DZ}$ and $K^{DZ}=(K^{DZ;\alpha,\beta})$ be the local functionals and the hamiltonian operator of the corresponding Dubrovin-Zhang hierarchy. We have $\oh_{\alpha,d}^{DZ}=\int\left(\Omega^{[0]}_{\alpha,d+1;1,0}+O(\hbar)\right)dx$ and $K^{DZ}=\eta\d_x+O(\hbar)$ (see \cite{BPS12a}). We see that the genus $0$ parts of the Dubrovin-Zhang and the DR hierarchies coincide. This agrees with our conjecture from the introduction.

\begin{remark}
In genus $0$ the DR hierarchy coincides with the quantum hierarchy from Symplectic Field Theory. The fact, that the genus $0$ part of the quantum hierarchy coincides with the genus $0$ part of the Dubrovin-Zhang hierarchy, was first noticed in F.~Bourgeois's thesis and then mentioned in~\cite{EGH00}.
\end{remark}

\subsubsection{String equation for the DR hierarchy}

\begin{lemma}\label{lemma:string equation}
We have
\begin{gather*}
\der{\og_{\alpha,d}}{u^1}=
\begin{cases}
\og_{\alpha,d-1},&\text{ if $d\ge 1$},\\
\int\sum_\mu\eta_{\alpha,\mu}u^\mu dx,&\text{ if $d=0$}.
\end{cases}
\end{gather*}
\end{lemma}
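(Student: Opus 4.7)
The strategy is to compare both sides of the claimed identity using Lemma~\ref{lemma:surjectivity}: a local functional $\overline{f}\in\hLambda_N^{[0]}$ is determined by its image under $\hT_0$ together with the bilinear invariants $\partial^2 h/(\partial u^\gamma\partial u^\beta)|_{u^*_*=0}$ of its density~$h$. These invariants are well-defined on $\hLambda_N^{[0]}$, since corrections of the form $h\mapsto h+\partial_x g$ contribute $\partial_x(\partial^2 g/(\partial u^\gamma\partial u^\beta))$, which vanishes once all $u^*_*$ are set to $0$. In particular, a local functional whose $\hT_0$-image vanishes and whose bilinear invariants all vanish must equal $\hQ(0)=0$.

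First I would compute $\hT_0(\partial\og_{\alpha,d}/\partial u^1)$. Extending the substitution in $T_0$ to allow zero modes $p^\beta_0$, the chain rule yields $\hT_0(\partial\og_{\alpha,d}/\partial u^1)=(\partial/\partial p^1_0)\tilde g_{\alpha,d}|_{p^*_0=0}$, where $\tilde g_{\alpha,d}$ is the natural extension of $g_{\alpha,d}$ in which the $a_i$'s are allowed to vanish. Differentiating in $p^1_0$ selects, in each summand, one secondary marked point with $(a_j,\alpha_j)=(0,1)$, the $n$-fold symmetry cancelling the factor $1/n!$. The resulting DR integral is simplified by applying in turn: (a) the pullback $DR_g(0,a_1,\ldots,a_{n-1},0)=\pi^*DR_g(0,a_1,\ldots,a_{n-1})$, where $\pi$ forgets the extra marked point; (b) the forgetful axiom~\eqref{eq:forgetful property}, which rewrites $c_{g,n+2}(\ldots\otimes e_1)$ as $\pi^*c_{g,n+1}(\ldots)$; (c) the identity $\lambda_g=\pi^*\lambda_g$; and (d) the projection formula combined with the classical string-equation pushforward $\pi_*(\psi_1^d)=\psi_1^{d-1}$ for $d\ge 1$ (respectively $\pi_*(1)=0$ for $d=0$). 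The constraint $\sum a_i=0$ with nonzero $a_i$'s forces at least two remaining variables, so the exceptional case $(g,n)=(0,2)$ never arises. This identifies $\hT_0(\partial\og_{\alpha,d}/\partial u^1)=g_{\alpha,d-1}$ for $d\ge 1$ and $0$ for $d=0$.

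Next I would verify
\begin{equation*}
\frac{\partial^2 h_{\alpha,d}}{\partial u^1\,\partial u^\beta}\bigg|_{u^*_*=0}=\delta_{d,0}\,\eta_{\alpha,\beta},
\end{equation*}
where $h_{\alpha,d}$ is the density of $\og_{\alpha,d}$. This invariant is the coefficient of the pure monomial $u^1 u^\beta$ in~$h_{\alpha,d}$. By Lemma~\ref{lemma:polynomialty}, the $g\ge 1$ contributions are polynomials homogeneous of positive degree $2g$ in the $a_i$'s; under the correspondence $a_i^{k_i}\leftrightarrow u^{\alpha_i}_{k_i}$ each carries at least one $u^*_k$ with $k\ge 1$ and contributes nothing. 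Terms with $n\ge 3$ give monomials of degree $\ge 3$ in the $u^*_0$'s. Only $(g,n)=(0,2)$ survives, producing $\int_{\oM_{0,3}}\psi_1^d c_{0,3}(e_\alpha\otimes e_1\otimes e_\beta)$, which vanishes for $d\ge 1$ (since $\psi_1^d=0$ on the point $\oM_{0,3}$) and equals $\eta_{\alpha,\beta}$ for $d=0$ by the unit axiom.

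Combining: for $d\ge 1$, the density $\partial h_{\alpha,d}/\partial u^1$ has vanishing bilinear invariants and $\hT_0$-image $g_{\alpha,d-1}$, so $\partial\og_{\alpha,d}/\partial u^1=\hQ(g_{\alpha,d-1})=\og_{\alpha,d-1}$. For $d=0$, the difference $\partial\og_{\alpha,0}/\partial u^1-\int\sum_\mu\eta_{\alpha,\mu}u^\mu\,dx$ has both its $\hT_0$-image and all its bilinear invariants equal to zero (the $\eta_{\alpha,\beta}$ contribution precisely cancels), hence coincides with $\hQ(0)=0$. The principal technical obstacle is aligning the combinatorics of the zero-mode extension (factors of $n$, index shifts) with the geometric string equation and the DR pullback, and ensuring the polynomial-to-derivative correspondence tracks correctly through the chain rule.
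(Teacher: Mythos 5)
Your proof is correct and follows essentially the same route as the paper: your ``zero-mode extension plus chain rule'' is exactly the paper's map $\hZ$ and Lemma~\ref{lemma:Z map}, the reduction of the DR integral via $(\pi_2)_*\psi_1^d=\psi_1^{d-1}$ (and $(\pi_2)_*1=0$) is the paper's central computation, and the residual ambiguity in $\ker\hT_0$ is resolved by the same coefficient check. One caveat: your opening uniqueness principle is misstated --- $\ker\hT_0$ is spanned by the $\int u^\beta dx$, which have vanishing second derivatives at $u^*_*=0$, so a local functional is pinned down by its $\hT_0$-image together with the \emph{linear} coefficients $\left.\der{h}{u^\beta}\right|_{u^*_*=0}$ of its density rather than the bilinear ones; since the functional you apply this to is $\der{\og_{\alpha,d}}{u^1}$, whose linear coefficients are precisely the quantities $\left.\frac{\d^2 h_{\alpha,d}}{\d u^1\d u^\beta}\right|_{u^*_*=0}$ that you compute, the argument as actually executed is sound.
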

This equation is analogous to the string equation for the quantum hierarchy in Symplectic Field Theory, that was proved in~\cite{FR11}.
\begin{proof}
We are going to use the material from Section~\ref{appendix:auxiliary lemmas}. The spaces $\cB'_N$, ${\cB'}_N^{pol}$ and the map $Z\colon\cB_N^{pol}\to{\cB'}_N^{pol}$ have obvious analogs $\widehat\cB'_N$, $\widehat{\cB'}_N^{pol}$ and $\hZ\colon\hcB_N^{pol}\to\widehat{\cB'}_N^{pol}$. An analog of Lemma~\ref{lemma:Z map} is clearly true for any local functional $\oh\in\hLambda^{[0]}_N$. So, we have
$$
\hT_0\left(\frac{\d\og_{\alpha,d}}{\d u^1}\right)=\left.\frac{\d\hZ(\hT_0(\og_{\alpha,d}))}{\d p^1_0}\right|_{p^*_0=0}=\left.\frac{\d\hZ(g_{\alpha,d})}{\d p^1_0}\right|_{p^*_0=0}.
$$ 
We obviously have
$$
\hZ(g_{\alpha,d})=\sum_{g\ge 0}\sum_{n\ge 2}\frac{(-\hbar)^g}{n!}\sum_{\substack{a_1,\ldots,a_n\in\mbZ\\a_1+\ldots+a_n=0\\1\le\alpha_1,\ldots,\alpha_n\le N}}\left(\int_{DR_g(0,a_1,\ldots,a_n)}\lambda_g\psi_1^d c_{g,n+1}(e_\alpha\otimes e_{\alpha_1}\otimes\ldots\otimes e_{\alpha_n})\right)\prod_{i=1}^n p_{a_i}^{\alpha_i}.
$$
Therefore,
\begin{align}
\left.\frac{\d\hZ(g_{\alpha,d})}{\d p^1_0}\right|_{p^*_0=0}=&\sum_{g\ge 0}\sum_{n\ge 2}\frac{(-\hbar)^g}{n!}\times\notag\\
&\times\sum_{\substack{a_1,\ldots,a_n\ne 0\\a_1+\ldots+a_n=0\\1\le\alpha_1,\ldots,\alpha_n\le N}}\left(\int_{DR_g(0,0,a_1,\ldots,a_n)}\lambda_g\psi_1^d c_{g,n+2}(e_\alpha\otimes e_1\otimes e_{\alpha_1}\otimes\ldots\otimes e_{\alpha_n})\right)\prod_{i=1}^n p_{a_i}^{\alpha_i}.\label{integral}
\end{align}
Let $\pi_2\colon\oM_{g,n+2}\to\oM_{g,n+1}$ be the forgetful map that forgets the second marked point. Since  $(\pi_2)_*\psi_1^d=\psi_1^{d-1}$, if $d\ge 1$, and $(\pi_2)_*\psi_1^0=0$, we get
\begin{multline*}
\int_{DR_g(0,0,a_1,\ldots,a_n)}\lambda_g\psi_1^d c_{g,n+2}(e_\alpha\otimes e_1\otimes e_{\alpha_1}\otimes\ldots\otimes e_{\alpha_n})=\\=\begin{cases}
\int_{DR_g(0,a_1,\ldots,a_n)}\lambda_g\psi_1^{d-1} c_{g,n+1}(e_\alpha\otimes e_1\otimes\ldots\otimes e_{\alpha_n}),&\text{if $d\ge 1$},\\
0,&\text{if $d=0$}.
\end{cases}
\end{multline*}
We obtain
$$
\hT_0\left(\frac{\d\og_{\alpha,d}}{\d u^1}\right)=
\begin{cases}
\hT_0\left(\og_{\alpha,d-1}\right),&\text{ if $d\ge 0$},\\
0,&\text{ if $d=0$}.
\end{cases}
$$

Suppose $d\ge 1$. From Lemma~\ref{lemma:surjectivity} it follows that $\frac{\d\og_{\alpha,d}}{\d u^1}-\og_{\alpha,d-1}$ is a linear combination of the local {functionals}~$\int u^\beta dx$. Since $\int_{DR_0(0,a,-a)}\psi_1^d c_{0,3}(e_\alpha\otimes e_{\alpha_1}\otimes e_{\alpha_2})=0$, we get $\frac{\d\og_{\alpha,d}}{\d u^1}-\og_{\alpha,d-1}=0$.   

Suppose $d=0$. By Lemma~\ref{lemma:surjectivity}, $\frac{\d\og_{\alpha,0}}{\d u^1}$ is a linear combination of the local functionals $\int u^\beta dx$. Denote $c_{0,3}(e_\alpha\otimes e_{\beta}\otimes e_{\gamma})$ by $c_{\alpha,\beta,\gamma}$. We have
$$
\frac{1}{2}\sum_{\substack{a\ne 0\\1\le\alpha_1,\alpha_2\le N}}\left(\int_{DR_0(0,a,-a)}c_{0,3}(e_\alpha\otimes e_{\alpha_1}\otimes e_{\alpha_2})\right)p_{a}^{\alpha_1}p_{-a}^{\alpha_2}=T_0\left(\frac{1}{2}\int\sum_{\mu,\nu}c_{\alpha,\mu,\nu}u^\mu u^\nu dx\right).
$$
Since $c_{1,\alpha,\beta}=\eta_{\alpha,\beta}$, we get $\frac{\d\og_{\alpha,0}}{\d u^1}=\int\sum_{\mu}\eta_{\alpha,\mu}u^\mu dx$. The lemma is proved.
\end{proof}

\subsubsection{String solution of the DR hierarchy}

In this section we show that the DR hierarchy has a special solution that satisfies the string equation.
  
Let $u^{str}(x;t^*_*;\hbar)$ be the solution of the DR hierarchy~\eqref{eq: DR hierarchy} specified by the initial condition $\left.(u^{str})^\alpha\right|_{t^*_*=0}=\delta_{\alpha,1}x$. 
\begin{lemma}\label{lemma:string solution}
We have 
$$
\frac{\d (u^{str})^\alpha}{\d t^1_0}=\sum_{\substack{1\le\mu\le N\\d\ge 0}}t^\mu_{d+1}\frac{\d (u^{str})^\alpha}{\d t^\mu_d}+\delta_{\alpha,1}.
$$
\end{lemma}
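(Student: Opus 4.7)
The plan is to introduce the ``string vector field''
$$
\mathcal{S}:=\der{}{t^1_0}-\sum_{\mu,d\ge 0}t^\mu_{d+1}\der{}{t^\mu_d}
$$
on the space of formal power series in the times and to prove the equivalent statement $\mathcal{S}(u^{str})^\alpha=\delta_{\alpha,1}$. Setting $V^\alpha:=\mathcal{S}(u^{str})^\alpha-\delta_{\alpha,1}$, the strategy is the standard one: check that $V^\alpha$ vanishes at $t^*_*=0$ and that it satisfies a linear homogeneous system of PDEs in the times, from which $V^\alpha\equiv 0$ follows by induction on the total degree in $t$.

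Two preliminary observations are needed. First, a direct computation gives the commutators
$$
\left[\der{}{t^\beta_q},\mathcal{S}\right]=-\der{}{t^\beta_{q-1}}\quad\text{for $q\ge 1$},\qquad \left[\der{}{t^\beta_0},\mathcal{S}\right]=0.
$$
Second, by Lemma~\ref{lemma:dx-flow} the first flow of the DR hierarchy reads $\der{u^\alpha}{t^1_0}=u^\alpha_x$, so at $t^*_*=0$ one has $\mathcal{S}(u^{str})^\alpha|_{t^*_*=0}=(u^{str})^\alpha_x|_{t^*_*=0}=\d_x(\delta_{\alpha,1}x)=\delta_{\alpha,1}$, i.e.\ $V^\alpha|_{t^*_*=0}=0$.

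Next I derive the evolution equation for $V^\alpha$. Write the $t^\beta_q$-flow as $\der{u^\alpha}{t^\beta_q}=X^\alpha_{\beta,q}(u)$, where the differential polynomial $X^\alpha_{\beta,q}:=\sum_\mu\eta^{\alpha,\mu}\d_x\frac{\delta\og_{\beta,q}}{\delta u^\mu}$ is determined by the hierarchy~\eqref{eq: DR hierarchy}. Applying $\der{}{t^\beta_q}$ to $V^\alpha$, using the commutator above together with the chain-rule identity $\mathcal{S}(u^{str})^\nu_k=\d_x^k\mathcal{S}(u^{str})^\nu=\d_x^k(V^\nu+\delta_{\nu,1})$, one obtains
$$
\der{V^\alpha}{t^\beta_q}=\sum_{\nu,k\ge 0}\frac{\d X^\alpha_{\beta,q}}{\d u^\nu_k}(u^{str})\,\d_x^k V^\nu+\left(\frac{\d X^\alpha_{\beta,q}}{\d u^1}(u^{str})-\delta_{q\ge 1}X^\alpha_{\beta,q-1}(u^{str})\right).
$$
For the argument to go through the inhomogeneous term in parentheses must vanish identically. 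This is the crux of the proof and is precisely where Lemma~\ref{lemma:string equation} is invoked. Since $\d/\d u^1$ commutes with $\d_x$ and (as one checks on densities) with the variational derivative $\delta/\delta u^\mu$, the lemma gives $\frac{\d X^\alpha_{\beta,q}}{\d u^1}=X^\alpha_{\beta,q-1}$ for $q\ge 1$; for $q=0$ the lemma yields $\der{\og_{\beta,0}}{u^1}=\int\sum_\nu\eta_{\beta,\nu}u^\nu\,dx$, whose variational derivative is the constant $\eta_{\beta,\mu}$ and is thus killed by the outer $\d_x$. Either way the bracket is zero.

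The resulting equation is linear and homogeneous in $V^\alpha$ and finitely many of its $x$-derivatives, with coefficients that are formal in $(x,t^*_*,\hbar)$. Combined with $V^\alpha|_{t^*_*=0}=0$, a standard induction on the total degree in the times $t^\mu_d$ forces $V^\alpha\equiv 0$, which is the statement of the lemma. The main obstacle is conceptual rather than computational: one has to spot that the unwanted inhomogeneity in the evolution equation for $V^\alpha$ is exactly the combination annihilated by the string relation for the DR Hamiltonians; once this is recognised, the rest is routine bookkeeping.
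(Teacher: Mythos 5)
Your proposal is correct and follows essentially the same route as the paper: the paper introduces the same operator $L=\frac{\d}{\d t^1_0}-\sum_{\mu,d}t^\mu_{d+1}\frac{\d}{\d t^\mu_d}$, derives the same evolution system for $L(u^{str})^\alpha$ via the chain rule, and uses Lemma~\ref{lemma:string equation} in the form $\frac{\d f^\alpha_{\beta,q}}{\d u^1}=f^\alpha_{\beta,q-1}$ to conclude by uniqueness that $L(u^{str})^\alpha=\delta_{\alpha,1}$. Your reformulation in terms of the difference $V^\alpha$ satisfying a homogeneous system with zero initial data is just an equivalent packaging of that uniqueness argument, with the added (welcome) explicit verification of the initial condition via Lemma~\ref{lemma:dx-flow}.
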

\begin{proof}
Define an operator $L$ by $L:=\frac{\d}{\d t^1_0}-\sum_{\mu,d}t^\mu_{d+1}\frac{\d}{\d t^\mu_d}$. Let $f^\alpha_{\beta,q}:=\sum_\mu\eta^{\alpha,\mu}\d_x\frac{\delta\og_{\beta,q}}{\delta u^\mu}$. We have
$$
\der{}{t^\beta_q}\left(L(u^{str})^\alpha\right)=L\der{(u^{str})^\alpha}{t^\beta_q}-\der{(u^{str})^\alpha}{t^\beta_{q-1}}=L f^\alpha_{\beta,q}-f^\alpha_{\beta,q-1}=\sum_{\gamma,n}\der{f^\alpha_{\beta,q}}{u^\gamma_n}\d_x^n (L u^\gamma)-f^\alpha_{\beta,q-1}.
$$
Here we, by definition, put $\der{}{t^\beta_{-1}}:=0$ and $f^\alpha_{\beta,-1}:=0$. We get the system
$$
\der{}{t^\beta_q}\left(L(u^{str})^\alpha\right)=\sum_{\gamma,n}\der{f^\alpha_{\beta,q}}{u^\gamma_n}\d_x^n(L u^\gamma)-f^\alpha_{\beta,q-1},\quad 1\le\alpha,\beta\le N,\quad q\ge 0.
$$
This system, together with the initial condition $\left.L(u^{str})^\alpha\right|_{t^*_*=0}=\delta_{\alpha,1}$, uniquely determines~$L(u^{str})^\alpha$. By Lemma~\ref{lemma:string equation}, $\frac{\d f^\alpha_{\beta,q}}{\d u^1}=f^\alpha_{\beta,q-1}$ and, therefore, $L(u^{str})^\alpha=\delta_{\alpha,1}$ satisfies the system. This concludes the proof of the lemma.
\end{proof}

\subsubsection{Local functional $\og$}

Introduce the local functional $\og\in\hLambda^{[0]}_N$ by $\og:=\hQ(g)$, where
$$
g:=\sum_{g\ge 0}\sum_{\substack{n\ge 2\\2g-2+n>0}}\frac{(-\hbar)^g}{n!}\sum_{\substack{a_1,\ldots,a_n\ne 0\\a_1+\ldots+a_n=0\\1\le\alpha_1,\ldots,\alpha_n\le N}}\left(\int_{DR_g(a_1,\ldots,a_n)}\lambda_g c_{g,n}(e_{\alpha_1}\otimes\ldots\otimes e_{\alpha_n})\right)\prod_{i=1}^n p_{a_i}^{\alpha_i}.
$$
The local functional $\og$ can be easily related to $\og_{1,1}$ as follows. Define a differential operator~$O$ on the space of differential polynomials $\hcA^{[0]}_N$ by $O:=2\hbar\frac{\d}{\d\hbar}-2+\sum_{\gamma,n}u^\gamma_n\frac{\d}{\d u^\gamma_n}$. Since the operator $\sum_{\gamma,n}u^\gamma_n\frac{\d}{\d u^\gamma_n}$ commutes with $\d_x$, the operator~$O$ is well defined on the space of local functionals~$\hLambda_N^{[0]}$. We claim that 
\begin{gather}\label{eq:dilaton equation}
\og_{1,1}=O\og.
\end{gather}
This equation can be regarded as analogous to the dilaton equation for the quantum hierarchy in Symplectic Field Theory, that was proved in~\cite{FR11}. Formula~\eqref{eq:dilaton equation} easily follows from the equation
\begin{multline*}
\int_{DR_g(0,a_1,\ldots,a_n)}\lambda_g \psi_1 c_{g,n+1}(e_1\otimes e_{\alpha_1}\otimes\ldots\otimes e_{\alpha_n})=\\
=\begin{cases}
(2g-2+n)\int_{DR_g(a_1,\ldots,a_n)}\lambda_g c_{g,n}(e_{\alpha_1}\otimes\ldots\otimes e_{\alpha_n}),&\text{if $2g-2+n>0$},\\
0,&\text{otherwise}.
\end{cases}
\end{multline*}

\begin{lemma}\label{lemma:derivative}
We have $\og_{\alpha,0}=\der{\og}{u^\alpha}$.
\end{lemma}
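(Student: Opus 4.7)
The plan is to follow the strategy used to prove Lemma~\ref{lemma:string equation}: first show $\hT_0\bigl(\der{\og}{u^\alpha}\bigr) = g_{\alpha,0}$, deduce via Lemma~\ref{lemma:surjectivity} that $\der{\og}{u^\alpha} - \og_{\alpha,0} = \sum_\beta c_\beta\int u^\beta\,dx$ for some constants $c_\beta\in\mbC[[\hbar]]$, and then show each $c_\beta$ vanishes.

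For the first step I would apply the identity
$$
\hT_0\!\left(\der{\oh}{u^\alpha}\right) = \left.\der{\hZ(\hT_0(\oh))}{p^\alpha_0}\right|_{p^*_0=0}
$$
from the proof of Lemma~\ref{lemma:string equation} to $\oh = \og$. Writing out $\hZ(g)$ (which equals the defining sum for $g$ with the constraint $a_i\ne 0$ dropped in favour of $a_i\in\mbZ$), the derivative $\der{}{p^\alpha_0}$ uses $S_n$-symmetry of the integrand to produce $n$ identical summands, converting $1/n!$ into $1/(n-1)!$; the restriction $p^*_0=0$ then re-imposes $a_i\ne 0$ on the remaining indices. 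After reindexing $n \mapsto n+1$ one recovers exactly formula~\eqref{eq:definition of g} for $g_{\alpha,0}$ with $d=0$: the would-be $n=1$ term vanishes automatically (since $a_1\ne 0$ and $a_1=0$ are incompatible), and the shifted stability condition is vacuous for $n\ge 2$.

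For the second step I would introduce the linear map $L_\beta\colon\hLambda^{[0]}_N\to\mbC[[\hbar]]$ defined by $L_\beta\bigl(\int h\,dx\bigr):=\left.\der{h}{u^\beta}\right|_{u^*=0}$. This is well defined, since $\d_x\phi=\sum u^\gamma_{s+1}\der{\phi}{u^\gamma_s}$ gives $\left.\der{(\d_x\phi)}{u^\beta}\right|_{u^*=0}=0$ for any $\phi\in\hcA_N$. Clearly $L_\beta\bigl(\int u^\gamma\,dx\bigr)=\delta_{\beta,\gamma}$, and $L_\beta(\og_{\alpha,0})=0$ by the $\hQ$-normalization of its integrand, whence
$$
c_\beta = L_\beta\!\left(\der{\og}{u^\alpha}\right) = \left.\frac{\d^2 h_{\hQ}}{\d u^\alpha\,\d u^\beta}\right|_{u^*=0} =: H_{\alpha,\beta},
$$
where $h_{\hQ}$ is the $\hQ$-normalized representative of $\og$. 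To see that $H_{\alpha,\beta}=0$, I would compute the $p$-degree-$2$ part of $g=\hT_0(\og)$ in two ways. On the one hand, since $n=2$ forces $g\ge 1$ by the stability condition $2g-2+n>0$, Lemma~\ref{lemma:polynomialty} implies that $\int_{DR_g(a,-a)}\lambda_g c_{g,2}(e_{\alpha_1}\otimes e_{\alpha_2})$ is a homogeneous polynomial in $a$ of degree $2g\ge 2$, hence with vanishing constant term in $a$; so the ``$a$-independent'' part of the $p$-degree-$2$ component of $g$ is zero. On the other hand, in $T_0(h_{\hQ})$ any monomial of $h_{\hQ}$ involving some $u^\gamma_s$ with $s\ge 1$ produces a positive power of $a$ under the Fourier substitution $u^\gamma_s=\sum_n(in)^s p^\gamma_n e^{inx}$, so the only source of an $a$-independent contribution of $p$-degree $2$ is the purely-$u$-quadratic part of $h_{\hQ}|_{u^*_{\ge 1}=0}$, which under $T_0$ yields $\tfrac12\sum_{\alpha,\beta}H_{\alpha,\beta}\sum_{a\ne 0}p^\alpha_a p^\beta_{-a}$. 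Matching the two computations forces $H_{\alpha,\beta}=0$ for all $\alpha,\beta$.

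The main obstacle is this last piece of bookkeeping: carefully separating the ``$a$-independent'' part of the $p$-degree-$2$ component of $T_0(h_{\hQ})$ and verifying that it is produced exactly by the Hessian $H_{\alpha,\beta}$ of $h_{\hQ}$ at $u^*=0$.
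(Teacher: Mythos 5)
Your proposal is correct and follows essentially the same route as the paper: reduce to the identity $\hT_0\bigl(\der{\og}{u^\alpha}\bigr)=g_{\alpha,0}$ via the map $\hZ$, invoke Lemma~\ref{lemma:surjectivity} to conclude the difference is a combination of the $\int u^\beta dx$, and then kill the constants. The paper disposes of the last step more tersely by noting that $\left.\og\right|_{\hbar=0}$ begins at cubic order in $u$; your Hessian computation via the $p$-degree-$2$ part of $g$ is a correct, more explicit version of the same observation.
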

\begin{proof}
The proof goes in the same way as the proof of Lemma~\ref{lemma:string equation}. We have
\begin{align*}
&\hT_0\left(\frac{\d\og}{\d u^\alpha}\right)\stackrel{\text{by Lemma~\ref{lemma:Z map}}}{=}\left.\frac{\d\hZ(\hT_0(\og))}{\d p^\alpha_0}\right|_{p^*_0=0}=\left.\frac{\d\hZ(g)}{\d p^\alpha_0}\right|_{p^*_0=0}=\\
=&\frac{\d}{\d p^\alpha_0}\left.\left(\sum_{g\ge 0}\sum_{\substack{n\ge 2\\2g-2+n>0}}\frac{(-\hbar)^g}{n!}\sum_{\substack{a_1,\ldots,a_n\in\mbZ\\a_1+\ldots+a_n=0\\1\le\alpha_1,\ldots,\alpha_n\le N}}\left(\int_{DR_g(a_1,\ldots,a_n)}\lambda_g c_{g,n}(e_{\alpha_1}\otimes\ldots\otimes e_{\alpha_n})\right)\prod_{i=1}^n p_{a_i}^{\alpha_i}\right)\right|_{p^*_0=0}=\\
=&\hT_0(\og_{\alpha,0}).
\end{align*}
By Lemma~\ref{lemma:surjectivity}, $\frac{\d\og}{\d u^\alpha}-\og_{\alpha,0}$ is a linear combination of the local functionals $\int u^\beta dx$. It is easy to see that 
$$
\left.\og\right|_{\hbar=0}=\int\left(\sum_{\alpha_1,\alpha_2,\alpha_3}\frac{1}{6}c_{\alpha_1,\alpha_2,\alpha_3}u^{\alpha_1}u^{\alpha_2}u^{\alpha_3}+O(u^4))\right)dx.
$$
We conclude that $\frac{\d\og}{\d u^\alpha}-\og_{\alpha,0}=0$. The lemma is proved.
\end{proof}


\subsection{Examples}

Here we explicitly compute the DR hierarchy in two examples. In both of them the phase space of a cohomological field theory will be one-dimensional. So, we will omit the first index in the Hamiltonians $\og_{\alpha,d}$ and denote them by $\og_d$.

\subsubsection{Trivial CohFT}

Consider the trivial cohomological field theory: $c^{triv}_{g,n}=1$. From Section~\ref{subsection:genus 0} we know that $\og_i=\int\left(\Omega_{1,i+1;1,0}^{[0]}+O(\hbar)\right)dx$. We have $\Omega_{1,i+1;1,0}^{[0]}=\frac{u^{i+2}}{(i+2)!}$ (see e.g.~\cite{BPS12a}), therefore,
$$
\og_i=\int\left(\frac{u^{i+2}}{(i+2)!}+O(\hbar)\right)dx.
$$

Let us compute the Hamiltonian $\og_1$. We have to compute the integrals
\begin{gather}\label{eq:integral for kdv}
\int_{DR_g(0,a_1,\ldots,a_n)}\lambda_g\psi_1.
\end{gather}
We have $DR_g(0,a_1,\ldots,a_n)\in H_{2(2g-2+n)}(\oM_{g,n+1};\mbC)$. Therefore, the integral \eqref{eq:integral for kdv} can be nonzero, only if $g=0$ and $n=3$, or $g=1$ and $n=2$. We already know the genus~$0$ integral. Let us compute the genus $1$ integral:
$$
\int_{DR_1(0,a,-a)}\lambda_1\psi_1=2\int_{DR_1(a,-a)}\lambda_1\stackrel{\text{by \eqref{eq:Hain}}}{=}2a^2\int_{\delta_0^{\{1,2\}}}\lambda_1=\frac{a^2}{12}.
$$
Thus, $\og_1=\int\left(\frac{u^3}{6}+\frac{\hbar}{24}u u_{xx}\right)dx$. This is the first Hamiltonian of the KdV hierarchy. From \cite[Lemma~2.4]{Bur13} it follows that the higher Hamiltonians $\og_i$, $i\ge 2$, coincide with the higher Hamiltonians of the KdV hierarchy. We conclude that the DR hierarchy for the trivial cohomological field theory coincides with the KdV hierarchy. 

The Dubrovin-Zhang hierarchy corresponding to the trivial cohomological field theory also coincides with the KdV hierarchy (see \cite{DZ05}). This agrees with the conjecture that we suggested in the introduction.


\subsubsection{Hodge classes}

Consider the cohomological field theory formed by the Hodge classes: $c^{Hodge}_{g,n}=1+\eps\lambda_1+\ldots+\eps^g\lambda_g$. Again we have $\og_i=\int\left(\frac{u^{i+2}}{(i+2)!}+O(\hbar)\right)dx$. Let us compute the Hamiltonian $\og_1$. We have to compute the integrals
\begin{gather}\label{eq:integral for hodge}
\int_{DR_g(0,a_1,\ldots,a_n)}\lambda_g(1+\eps\lambda_1+\ldots+\eps^g\lambda_g)\psi_1.
\end{gather}
The coefficient of $\eps^j$ can be nonzero, only if $2g-2+n=g+j+1$, or, equivalently, $g-j=3-n$. Note that $\lambda_g^2=0$, for $g\ge 1$. Thus, the coefficient of $\eps^j$ in the integral~\eqref{eq:integral for hodge} can be nonzero, only if $g=j=0$ and $n=3$, or $g\ge 1$, $j=g-1$ and $n=2$.  For $g\ge 1$, we have
$$
\int_{DR_g(0,a,-a)}\lambda_g\lambda_{g-1}\psi_1=2g\int_{DR_g(a,-a)}\lambda_g\lambda_{g-1}=a^{2g}\frac{|B_{2g}|}{(2g)!}.
$$
Here $B_{2g}$ are Bernoulli numbers: $B_2=\frac{1}{6},B_4=-\frac{1}{30},\ldots$; and the computation of the last integral can be found, for example, in~\cite{CMW12}. We get $\og_1=\int\left(\frac{u^3}{6}+\sum_{g\ge 1}\hbar^g\eps^{g-1}\frac{|B_{2g}|}{2(2g)!}u u_{2g}\right)dx$. We conclude that our DR hierarchy coincides with the deformed KdV hierachy (see~\cite{Bur13} and Lemma~2.4 there). 

Consider the Dubrovin-Zhang hierarchy corresponding to our cohomological field theory. In~\cite{Bur13} it is proved that the Miura transformation
$$
u\mapsto \widetilde u=u+\sum_{g\ge 1}\frac{(-1)^g}{2^{2g}(2g+1)!}\hbar^{g}\eps^g u_{2g}
$$
transforms it to the deformed KdV hierarchy. This again agrees with our conjecture from the introduction.


\section{Commutativity of the Hamiltonians}\label{section:commutativity}

In this section we prove Theorem~\ref{theorem:main}. In Section~\ref{subsection:Losev-Manin} we introduce the Losev-Manin moduli space $\LM_{r+n_0}$ and a map $q\colon \oM_{g;a_1,\ldots,a_n}\to\LM_{r+n_0}/S_r$. In Section~\ref{subsection:pullbacks} we compute the pullbacks of certain divisors in $\LM_{r+n_0}/S_r$. In Section~\ref{subsection:proof of the main theorem} we show that an equation of certain divisors in $\LM_{r+n_0}$ implies Theorem~\ref{theorem:main}. 

We recommend the reader the paper~\cite{BSSZ12} for a more detailed discussion of the geometric constructions that we use here.

\subsection{Losev-Manin moduli space}\label{subsection:Losev-Manin}

The Losev-Manin moduli space $\LM_r$ is a compactification of $\M_{0,r+2}$. It is the moduli space of chains of spheres with two special ``white'' marked points $0$ and $\infty$ at the extremities of the chain and $r$ more ``black'' marked points in the other spheres. The black points are allowed to coincide with each other and there should be at least one black point per sphere. For more details see~\cite{LM00}.

We have two forgetful maps from the DR-space $\oM_{g;a_1,\ldots,a_n}$:
$$
\LM_{r+n_0}/S_r\stackrel{q}{\longleftarrow}\oM_{g;a_1,\ldots,a_n}\stackrel{st}{\longrightarrow}\oM_{g,n},
$$
where $n_0$ is the number of indices $i$ such that $a_i=0$ and $r:=2g-2+n$ is the number of branch points.

The map~$st$ assigns to a relative stable map its stabilized source curve. This is the map that we used to define the DR-cycle $\DR_g(a_1,\ldots,a_n)$.

Let us describe the map $q$. It is very similar to the branch morphism (see e.g.~\cite{GJV11}). By~$S_r$ we denote the symmetric group. It acts on $\LM_{r+n_0}$ by permutation of the first $r$ black marked points. Suppose $f\colon C\to T$ is a relative stable map. The map $q$ assigns to $f$ its target rational curve $T$. Since $f\colon C\to T$ is a relative stable map, we already have the points $0$ and~$\infty$ in $T$. Therefore, it remains to choose black marked points in $T$. Suppose that over each irreducible component of~$T$ the map $f$ is a ramified covering. Then the black marked points in~$T$ are the~$r$ branch points and the images of the marked points with zero labels in the source curve $C$. The fact, that the branch points are not numbered, is the reason that we have to take the quotient of the Losev-Manin space by the action of the symmetric group.  


\subsection{Pullbacks of divisors}\label{subsection:pullbacks}

Consider the space $\oM_{g;0,0,a_1,\ldots,a_n}$, where $a_1,\ldots,a_n\ne 0$. We denote by $p_1$ and $p_2$ the first two ``free'' marked points in a curve from $\oM_{g;0,0,a_1,\ldots,a_n}$. The images of these points in a curve from $\LM_{r+2}/S_r$ will be denoted by $x_1$ and $x_2$ correspondingly.

Denote by $D_{(0,x_1|x_2,\infty)}\subset\LM_{r+2}$ the divisor of two-component curves, where the pairs of points~$0,x_1$ and $x_2,\infty$ lie in different components. Let $D^{Sym}_{(0,x_1|x_2,\infty)}$ be its symmetrization in~$\LM_{r+2}/S_r$. Let us compute the class $st_*q^* D^{Sym}_{(0,x_1|x_2,\infty)}$. In order to do it we have to introduce a bit more notations.

Let $b_1,\ldots,b_m$ be some integers such that $b_1+\ldots+b_m=0$ and not of all them are equal to zero. Let us divide the list $b_1,\ldots,b_m$ into two non-empty disjoint parts, $I\sqcup J = \{1,\ldots,m\}$, in such a way that $\sum_{i\in I}b_i<0$ or, equivalently, $\sum_{j\in J} b_j>0$. Then we choose a list of positive integers~$k_1,\ldots, k_p$ in such a way that 
$$
\sum_{i \in I}b_i+\sum_{i=1}^p k_i=\sum_{j\in J}b_j-\sum_{i=1}^p k_i = 0.
$$
Now we take two DR-cycles $\DR_{g_1}(b_I,k_1,\ldots,k_p)$ and $\DR_{g_2}(-k_1,\ldots,-k_p,b_J)$ and glue them together at the ``new'' marked points labeled by $k_1,\ldots,k_p$. Since we want to get the genus~$g$ in the end, we impose the condition $g_1+g_2+p-1=g$. We denote by 
$$
\DR_{g_1}(b_I,k_1,\ldots,k_p)\boxtimes\DR_{g_2}(-k_1,\ldots,-k_p,b_J)
$$
the resulting cycle in~$\oM_{g,m}$.

We have the following formula (see \cite{BSSZ12}):
\begin{align}\label{eq:splitting formula}
st_*q^* D^{Sym}_{(0,x_1|x_2,\infty)}=&\sum_{\substack{I\sqcup J=\{1,\ldots,n\}\\I,J\ne\emptyset\\\sum_{i\in I}a_i<0}}\sum_{\substack{g_1,g_2\ge 0\\p\ge 1\\g_1+g_2+p-1=g}}\sum_{\substack{k_1,\ldots,k_p\ge 1\\\sum_{i\in I}a_i+\sum_{i=1}^p k_i=0}}\frac{\prod_{i=1}^p k_i}{p!}\times\\
&\times\DR_{g_1}(0_{p_1},a_I,k_1,\ldots,k_p)\boxtimes\DR_{g_2}(-k_1,\dots,-k_p,0_{p_2},a_J).\notag
\end{align} 
Here the symbol $0_{p_i}$ means that the marked point $p_i$ has zero label.

Denote the class $st_*q^* D^{Sym}_{(0,x_1|x_2,\infty)}$ by $D_{x_1,x_2;a_1,\ldots,a_n}$. Since $D^{Sym}_{(0,x_1|x_2,\infty)}-D^{Sym}_{(0,x_2|x_1,\infty)}=0$ (\cite{LM00}), we have
\begin{gather}\label{eq:main equation}
D_{x_1,x_2;a_1,\ldots,a_n}-D_{x_2,x_1;a_1,\ldots,a_n}=0.
\end{gather}
The formulas~\eqref{eq:splitting formula} and~\eqref{eq:main equation} play the crucial role in the proof of Theorem~\ref{theorem:main}.


\subsection{Proof of Theorem~\ref{theorem:main}}\label{subsection:proof of the main theorem}

Since the class $\lambda_g$ vanishes on $\oM_{g,n}\backslash\M_{g,n}^{ct}$, we have
\begin{gather*}
\int_{\DR_{g_1}(0_{p_1},a_I,k_1,\ldots,k_p)\boxtimes\DR_{g_2}(-k_1,\dots,-k_p,0_{p_2},a_J)}\psi_{p_1}^{d_1}\psi_{p_2}^{d_2}\lambda_g c_{g,n+2}(e_{\alpha_1}\otimes e_{\beta_I}\otimes e_{\alpha_2}\otimes e_{\beta_J})=0,\text{ if $p\ge 2$}.
\end{gather*}
Here by $\psi_{p_i}$ we denote the psi-class corresponding to the marked point~$p_i$. If $p=1$, then from equation~\eqref{eq:gluing equation} it follows that
\begin{align*}
\int_{\DR_{g_1}(0_{p_1},a_I,k)\boxtimes\DR_{g_2}(-k,0_{p_2},a_J)}&\psi_{p_1}^{d_1}\psi_{p_2}^{d_2}\lambda_g c_{g,n+2}(e_{\alpha_1}\otimes e_{\beta_I}\otimes e_{\alpha_2}\otimes e_{\beta_J})=\\
=&\sum_{\mu,\nu}\eta^{\mu,\nu}\left(\int_{\DR_{g_1}(0_{p_1},a_I,k)}\psi_{p_1}^{d_1}\lambda_{g_1}c_{g_1,|I|+2}(e_{\alpha_1}\otimes e_{\beta_I}\otimes e_{\mu})\right)\times\\
&\times\left(\int_{\DR_{g_2}(-k,0_{p_2},a_J)}\psi_{p_2}^{d_2}\lambda_{g_2}c_{g_2,|J|+2}(e_{\nu}\otimes e_{\alpha_2}\otimes e_{\beta_J})\right).
\end{align*}
Using \eqref{eq:splitting formula} we get
\begin{multline*}
\sum_{g\ge 0}\sum_{n\ge 2}\frac{(-\hbar)^g}{n!}\sum_{\substack{a_1,\ldots,a_n\ne 0\\a_1+\ldots+a_n=0\\1\le\beta_1,\ldots,\beta_n\le N}}\left(\int_{D_{x_1,x_2;a_1,\ldots,a_n}}\lambda_g\psi_{p_1}^{d_1}\psi_{p_2}^{d_2}c_{g,n+2}(e_{\alpha_1}\otimes e_{\alpha_2}\otimes e_{\beta_1}\otimes\ldots\otimes e_{\beta_n})\right)\prod_{i=1}^n p_{a_i}^{\beta_i}=\\
=\sum_{\substack{k\ge 1\\\mu,\nu}}k\eta^{\mu,\nu}\der{g_{\alpha_1,d_1}}{p^{\mu}_k}\der{g_{\alpha_2,d_2}}{p^\nu_{-k}}.
\end{multline*}
In the same way we obtain
\begin{multline*}
-\sum_{g\ge 0}\sum_{n\ge 2}\frac{(-\hbar)^g}{n!}\sum_{\substack{a_1,\ldots,a_n\ne 0\\a_1+\ldots+a_n=0\\1\le\beta_1,\ldots,\beta_n\le N}}\left(\int_{D_{x_2,x_1;a_1,\ldots,a_n}}\lambda_g\psi_{p_2}^{d_2}\psi_{p_1}^{d_1}c_{g,n+2}(e_{\alpha_2}\otimes e_{\alpha_1}\otimes e_{\beta_1}\otimes\ldots\otimes e_{\beta_n})\right)\prod_{i=1}^n p_{a_i}^{\beta_i}=\\
=-\sum_{\substack{k\ge 1\\\mu,\nu}}k\eta^{\mu,\nu}\der{g_{\alpha_2,d_2}}{p^{\mu}_k}\der{g_{\alpha_1,d_1}}{p^\nu_{-k}}.
\end{multline*}
Summing these two expressions and using~\eqref{eq:main equation}, we obtain
$$
0=\sum_{\substack{k\ge 1\\\mu,\nu}}k\eta^{\mu,\nu}\der{g_{\alpha_1,d_1}}{p^{\mu}_k}\der{g_{\alpha_2,d_2}}{p^\nu_{-k}}-\sum_{\substack{k\ge 1\\\mu,\nu}}k\eta^{\mu,\nu}\der{g_{\alpha_2,d_2}}{p^{\mu}_k}\der{g_{\alpha_1,d_1}}{p^\nu_{-k}}=-i\{g_{\alpha_1,d_1},g_{\alpha_2,d_2}\}_\eta.
$$
By Lemmas~\ref{lemma:pullback} and~\ref{lemma:surjectivity}, this implies that the bracket $\{\og_{\alpha_1,d_1},\og_{\alpha_2,d_2}\}_{\eta\d_x}$ is a linear combination of the functionals $\int u^\beta dx$. Since $\{\og_{\alpha_1,d_1},\og_{\alpha_2,d_2}\}_{\eta\d_x}\in\hLambda_N^{[1]}$, we get $\{\og_{\alpha_1,d_1},\og_{\alpha_2,d_2}\}_{\eta\d_x}=0$. Theorem~\ref{theorem:main} is proved.


{
\appendix

\section{Properties of the map $T_0\colon\Lambda\to\cB_N^{pol}$}\label{appendix}

Here we prove Lemmas~\ref{lemma:surjectivity} and~\ref{lemma:pullback}. We also formulate two auxiliary lemmas: Lemma~\ref{lemma:variational} will be used in the proof of Lemma~\ref{lemma:pullback} and Lemma~\ref{lemma:Z map} is used in the proofs of Lemmas~\ref{lemma:string equation} and~\ref{lemma:derivative}.

\subsection{Auxiliary lemmas}\label{appendix:auxiliary lemmas}

\begin{lemma}\label{lemma:variational}
For any local functional $\oh\in\Lambda_N$, we have 
\begin{gather}\label{eq:variational}
T\left(\frac{\delta\oh}{\delta u^\alpha}\right)=\sum_{n\ne 0}\der{}{p^\alpha_n}(T_0(\oh))e^{-inx}+T_0\left(\der{\oh}{u^\alpha}\right).
\end{gather}
\end{lemma}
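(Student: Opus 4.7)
The plan is to verify the identity Fourier-mode-by-Fourier-mode in the $e^{imx}$ expansion. Decompose $T = \sum_{m \in \mbZ} T_m\, e^{imx}$, so each $T_m \colon \cA_N \to \mbC[[p^*_*]]$ is linear and $T_0$ is the paper's map. The substitution defining $T$ is a ring homomorphism, and on the Fourier side $\d_x$ multiplies $e^{imx}$ by $im$; this gives $T \circ \d_x = \d_x \circ T$, i.e.\ $T_m \circ \d_x = im\cdot T_m$.

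The engine of the proof is a chain-rule identity. Fix any $h \in \cA_N$ representing $\oh$. Differentiating the substitution $u^\alpha_k \mapsto \sum_{n' \neq 0}(in')^k p^\alpha_{n'} e^{in'x}$ with respect to $p^\alpha_n$ (for $n \neq 0$) gives
$$
\der{T(h)}{p^\alpha_n} \;=\; e^{inx}\sum_{k \geq 0}(in)^k T\!\left(\der{h}{u^\alpha_k}\right).
$$
Reading off the coefficient of $e^{0\cdot x}$ on each side yields the bridge identity
$$
\der{T_0(h)}{p^\alpha_n} \;=\; \sum_{k \geq 0}(in)^k T_{-n}\!\left(\der{h}{u^\alpha_k}\right), \qquad n \neq 0,
$$
which converts $\d/\d p^\alpha_n$ into a combination of $\d/\d u^\alpha_k$.

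Now compute the left-hand side of \eqref{eq:variational} by expanding the variational derivative and passing to Fourier modes:
$$
T\!\left(\frac{\delta\oh}{\delta u^\alpha}\right) \;=\; \sum_{i \geq 0}(-\d_x)^i T\!\left(\der{h}{u^\alpha_i}\right) \;=\; \sum_{m \in \mbZ} e^{imx}\sum_{i \geq 0}(-im)^i T_m\!\left(\der{h}{u^\alpha_i}\right).
$$
For $m=0$ only the term $i=0$ survives, contributing $T_0(\der{h}{u^\alpha})$. For $m \neq 0$ the bridge identity, applied at $n=-m$, collapses the inner sum to $\der{T_0(h)}{p^\alpha_{-m}}$. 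Reindexing $n = -m$ in the outer sum produces the right-hand side of \eqref{eq:variational}.

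The only real bookkeeping is tracking how the alternating factor $(-\d_x)^i$ combines with the Fourier multiplier $(im)^i$ and how the reflection $n \mapsto -m$ enters. Passage from $\cA_N$ to $\Lambda_N$ is automatic: both sides depend on $h$ only modulo $\im \d_x$, because $T_0$ annihilates $\d_x$-exact differential polynomials and $\der{}{u^\alpha}$ commutes with $\d_x$.
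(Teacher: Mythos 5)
Your proof is correct, but it is organized differently from the paper's. The paper reduces by linearity to monomial local functionals $\oh=\int u^{\alpha_1}_{n_1}\ldots u^{\alpha_k}_{n_k}dx$, expands $(-\d_x)^{n_j}$ of the product by the Leibniz rule, and then compares the resulting explicit sums over Fourier indices on the two sides of \eqref{eq:variational}. You instead work with an arbitrary representative $h$ all at once and derive the identity from two structural facts: the chain rule for the substitution homomorphism, which gives your bridge identity $\der{T_0(h)}{p^\alpha_n}=\sum_{k}(in)^k T_{-n}\left(\der{h}{u^\alpha_k}\right)$, and the intertwining $T\circ\d_x=\frac{d}{dx}\circ T$, which turns $(-\d_x)^i$ into the multiplier $(-im)^i$ on the $m$-th Fourier mode. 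The mode-by-mode bookkeeping is right: at $m=0$ only $i=0$ survives and produces $T_0\left(\der{\oh}{u^\alpha}\right)$, while for $m\ne 0$ the bridge identity at $n=-m$ collapses the inner sum to $\der{T_0(\oh)}{p^\alpha_{-m}}$, and all sums involved are finite because $h$ is polynomial in the $u^\alpha_j$ with $j\ge 1$. Your closing remark on independence of the representative is also needed and correctly justified. What your route buys is a cleaner separation of where the two terms on the right-hand side come from ($m\ne 0$ versus $m=0$) without ever writing out a monomial computation; the paper's route is more elementary and self-contained but requires the explicit Leibniz expansion. Both constitute complete proofs.
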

\begin{proof}
It is sufficient to prove the lemma for a local functional $\oh$ of the form $\oh=\int u^{\alpha_1}_{n_1}\ldots u^{\alpha_k}_{n_k}dx$, $k\ge 1$. Let $d:=n_1+\ldots+n_k$. We have
\begin{gather*}
\frac{\delta\oh}{\delta u^\alpha}=\sum_{j=1}^k\delta_{\alpha,\alpha_j}(-\d_x)^{n_j}\prod_{r\ne j}u^{\alpha_r}_{n_r}=\sum_{j=1}^k(-1)^{n_j}\delta_{\alpha,\alpha_j}\sum_{c_1+\ldots+\widehat{c_j}+\ldots+c_k=n_j}\frac{n_j!}{\prod_{r\ne j} c_r!}\prod_{r\ne j}u^{\alpha_r}_{n_r+c_r}.
\end{gather*}
We get
\begin{gather*}
T\left(\frac{\delta\oh}{\delta u^\alpha}\right)=i^d\sum_{j=1}^k\delta_{\alpha,\alpha_j}\sum_{a_1,\ldots,\widehat{a_j},\ldots,a_k\ne 0}\left(-\sum_{r\ne j}a_r\right)^{n_j}\left(\prod_{r\ne j}a_r^{n_r}\right)\left(\prod_{r\ne j}p^{\alpha_r}_{a_r}\right)e^{i x\sum_{r\ne j}a_r}.
\end{gather*}
On the other hand we have
\begin{align*}
&T_0(\oh)=i^d\sum_{\substack{a_1+\ldots+a_k=0\\a_1,\ldots,a_k\ne 0}}\left(\prod_{r=1}^k a_r^{n_r}\right)\left(\prod_{r=1}^k p_{a_r}^{\alpha_r}\right),\\
&T_0\left(\der{\oh}{u^\alpha}\right)=i^d\sum_{j=1}^k\delta_{\alpha,\alpha_j}\delta_{0,n_j}\sum_{\substack{a_1,\ldots,\widehat{a_j},\ldots,a_k\ne 0\\a_1+\ldots+\widehat{a_j}+\ldots+a_k=0}}\left(\prod_{r\ne j}a_r^{n_r}\right)\left(\prod_{r\ne j}p^{\alpha_r}_{a_r}\right).
\end{align*}
Now formula~\eqref{eq:variational} is clear.
\end{proof}

Let $\oh\in\Lambda_N$ be an arbitrary local functional. We want to give a formula for $T_0\left(\frac{\d\oh}{\d u^\alpha}\right)$. Before doing that we need to introduce some notations. 

Let us consider formal variables $p^\alpha_n$ for all integers $n$ and let $\cB'_N\subset\mbC[[p^\alpha_n]]_{n\in\mbZ}$ be the subalgebra that consists of power series of the form
$$
f=\sum_{k\ge 0}\sum_{\substack{1\le\alpha_1,\ldots,\alpha_k\le N\\n_1,\ldots,n_k\in\mbZ\\n_1+\ldots+n_k=0}}f_{\alpha_1,\ldots,\alpha_k}^{n_1,\ldots,n_k}p^{\alpha_1}_{n_1}p^{\alpha_2}_{n_2}\ldots p^{\alpha_k}_{n_k},
$$
where $f_{\alpha_1,\ldots,\alpha_k}^{n_1,\ldots,n_k}\in\mbC$. For any symmetric matrix $\eta=(\eta^{\alpha,\beta})_{1\le\alpha,\beta\le N}\in\Mat_{N,N}(\mbC)$ we endow the algebra $\cB'_N$ with the Poisson algebra structure by $\{p^\alpha_m,p^\beta_n\}_\eta:=i m \eta^{\alpha,\beta}\delta_{m+n,0}$. Denote by ${\cB'}_N^{pol}\subset\cB'_N$ the subspace that consists of power series of the form
$$
f=\sum_{k\ge 0}\sum_{\substack{1\le\alpha_1,\ldots,\alpha_k\le N\\n_1,\ldots,n_k\in\mbZ\\n_1+\ldots+n_k=0}}P_{\alpha_1,\ldots,\alpha_k}(n_1,\ldots,n_k)p^{\alpha_1}_{n_1}p^{\alpha_2}_{n_2}\ldots p^{\alpha_k}_{n_k},
$$
where $P_{\alpha_1,\ldots,\alpha_k}(z_1,\ldots,z_k)\in\mbC[z_1,\ldots,z_k]$ are some polynomials and the degrees of them in the power series $f$ are bounded from above.

Given a series 
$$
f=\sum_{k\ge 0}\sum_{\substack{1\le\alpha_1,\ldots,\alpha_k\le N\\n_1,\ldots,n_k\ne 0\\n_1+\ldots+n_k=0}}P_{\alpha_1,\ldots,\alpha_k}(n_1,\ldots,n_k)p^{\alpha_1}_{n_1}p^{\alpha_2}_{n_2}\ldots p^{\alpha_k}_{n_k}\in\cB_N^{pol},
$$
define a series $Z(f)\in{\cB'}^{pol}_N$ by
$$
Z(f):=\sum_{k\ge 0}\sum_{\substack{1\le\alpha_1,\ldots,\alpha_k\le N\\n_1,\ldots,n_k\in\mbZ\\n_1+\ldots+n_k=0}}P_{\alpha_1,\ldots,\alpha_k}(n_1,\ldots,n_k)p^{\alpha_1}_{n_1}p^{\alpha_2}_{n_2}\ldots p^{\alpha_k}_{n_k}.
$$
We have obtained a map $Z\colon\cB_N^{pol}\to{\cB'}_N^{pol}$.

\begin{lemma}\label{lemma:Z map}
For any local functional $\oh\in\Lambda_N$, we have
$$
T_0\left(\der{\oh}{u^\alpha}\right)=
\begin{cases}
1,&\text{if $\oh=\int u^\alpha dx$},\\
\left.\der{Z(T_0(\oh))}{p^\alpha_0}\right|_{p^*_0=0},&\text{otherwise}.
\end{cases}
$$
\end{lemma}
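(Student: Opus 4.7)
The plan is to reduce by linearity to monomial local functionals $\oh=\int h\,dx$ with $h=u^{\alpha_1}_{j_1}\cdots u^{\alpha_k}_{j_k}$, $j_r\ge 0$, and verify the identity by explicit substitution on both sides. First I would compute the left-hand side: since $\partial h/\partial u^\alpha$ picks out only those factors with $j_r=0$ and $\alpha_r=\alpha$, the substitution $u^\beta_j=\sum_{n\ne 0}(in)^j p^\beta_n e^{inx}$ followed by extraction of the zeroth Fourier coefficient yields
$$T_0\!\left(\der{h}{u^\alpha}\right)=\sum_{r:\,j_r=0,\,\alpha_r=\alpha}\;\sum_{\substack{\sum_{s\ne r}a_s=0\\ a_s\ne 0}}\prod_{s\ne r}(ia_s)^{j_s}\,p^{\alpha_s}_{a_s}.$$

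Next I would compute the right-hand side. Direct substitution gives $T_0(h)=\sum_{\sum a_r=0,\,a_r\ne 0}\prod_r(ia_r)^{j_r}p^{\alpha_r}_{a_r}$, and applying $Z$ simply relaxes the constraint $a_r\ne 0$ to $a_r\in\mbZ$. The key observation I would exploit is that the factor $(ia_r)^{j_r}$ vanishes whenever $a_r=0$ and $j_r\ge 1$, so the genuinely new summands produced by $Z$ are exactly those configurations in which every zero-valued $a_r$ satisfies $j_r=0$. Differentiating in $p^\alpha_0$ and then restricting to $p^*_0=0$ isolates configurations with exactly one such zero-valued index, which must moreover satisfy $\alpha_r=\alpha$, thereby reproducing the expression above term-by-term.

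The degenerate case $h=u^\alpha$ needs to be handled separately: there the constraint $a_1=0$ combined with $a_1\ne 0$ is inconsistent, so $T_0(u^\alpha)=0$ and the ``otherwise'' formula returns $0$, while $\der{u^\alpha}{u^\alpha}=1$ gives $T_0=1$ directly; this explains the first branch of the statement. Descent from $\cA_N$ to $\Lambda_N$ is then automatic since $T_0$ kills $\im(\d_x)$, the operator $\der{}{u^\alpha}$ commutes with $\d_x$, and $Z$ depends on $\oh$ only through $T_0(\oh)$.

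The main obstacle is not conceptual but notational: one must carefully track which summation indices are constrained to be nonzero at each stage and recognize that $Z$ is precisely the device that restores the zero Fourier mode, which in turn detects the derivative-free occurrences of $u^\alpha$. This parallels the proof of Lemma~\ref{lemma:variational}, and in principle the identity can be deduced from it by extracting the constant-in-$x$ Fourier coefficient of~\eqref{eq:variational}, though the direct monomial computation sketched above seems more transparent.
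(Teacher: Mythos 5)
Your proposal is correct and follows essentially the same route as the paper: the paper's proof simply reduces to monomial local functionals $\int u^{\alpha_1}_{n_1}\cdots u^{\alpha_k}_{n_k}\,dx$ and declares the verification an ``easy direct computation,'' which is precisely the Fourier-mode calculation you carry out explicitly (including the observation that $Z$ restores the zero modes only at derivative-free slots, and the separate treatment of $\oh=\int u^\alpha\,dx$). No discrepancy to report.
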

\begin{proof}
The cases $\oh=\int 1 dx$ and $\oh=\int u^\mu_d dx$ are obvious. It remains to check the lemma for $\oh=\int u^{\alpha_1}_{n_1}\ldots u^{\alpha_k}_{n_k}dx$, where $k\ge 2$. This can be done by an easy direct computation.
\end{proof}


\subsection{Proof of Lemma~\ref{lemma:surjectivity}}

Let us prove the surjectivity. The space $\cB_N^{pol}$ is spanned by the constants and the elements of the form
$$
s^{\alpha_1,\ldots,\alpha_k}_{d_1,\ldots,d_k}:=\sum_{\substack{a_1+\ldots+a_k=0\\a_1,\ldots,a_k\ne 0}}\left(\prod_{i=1}^k a_i^{d_i}\right)\left(\prod_{i=1}^k p^{\alpha_i}_{a_i}\right),\quad k\ge 2.
$$
It is easy to see that $(-i)^{\sum d_j} T_0\left(\int u^{\alpha_1}_{d_1}\ldots u^{\alpha_k}_{d_k} dx\right)=s^{\alpha_1,\ldots,\alpha_k}_{d_1,\ldots,d_k}$. Now the surjectivity is clear.

Let us find the kernel of the map $T_0$. Define $\Lambda_N^{[k,l]}$ to be the subspace of $\Lambda_N$ spanned by the local functionals of the form $\int u^{\alpha_1}_{d_1}\ldots u^{\alpha_l}_{d_l}dx$, where $\sum_{i=1}^l d_i=k$. Let $\cB_N^{pol;k,l}\subset\cB_N^{pol}$ be the subspace spanned by the elements of the form 
$$
\sum_{\substack{a_1+\ldots+a_k=0\\a_1,\ldots,a_k\ne 0}}P(a_1,\ldots,a_l)p^{\alpha_1}_{a_1}\ldots p^{\alpha_l}_{a_l},
$$
where $P(z_1,\ldots,z_l)\in\mbC[z_1,\ldots,z_l]$ are homogeneous polynomials of degree $k$. It is clear that~$T_0$ maps $\Lambda_N^{[k,l]}$ in~$\cB_N^{pol;k,l}$. Obviously, the kernel of the map $T_0\colon\bigoplus_{\substack{k\ge 0\\l\le 1}}\Lambda^{[k,l]}_N\to\bigoplus_{\substack{k\ge 0\\l\le 1}}\cB_N^{pol;k,l}$ is spanned by the local functionals $\int u^\alpha dx\in\Lambda_N^{[0,1]}$. It remains to prove that the map $T_0\colon\Lambda_N^{[k,l]}\to\cB_N^{pol;k,l}$ is injective for $l\ge 2$. Consider a local functional $\of=\int f dx\in\Lambda_N^{[k,l]}$ such that $T_0(\of)=0$. We have
$$
T(f)=\sum_{1\le\alpha_1,\ldots,\alpha_l\le N}\sum_{a_1,\ldots,a_l\ne 0}P_{\alpha_1,\ldots,\alpha_l}(a_1,\ldots,a_l)\left(\prod_{i=1}^lp^{\alpha_i}_{a_i}\right)e^{ix\sum a_j},
$$ 
where the polynomials $P_{\alpha_1,\ldots,\alpha_l}(z_1,\ldots,z_l)\in\mbC[z_1,\ldots,z_l]$ have degree $k$ and satisfy the property
$$
P_{\alpha_1,\ldots,\alpha_l}(z_1,\ldots,z_l)=P_{\alpha_{\sigma_1},\ldots,\alpha_{\sigma_l}}(z_{\sigma_1},\ldots,z_{\sigma_l}),
$$
for an arbitrary permutation $\sigma\in S_l$. Define the subset $H_l\subset\mbC^l$ by 
$$
H_l:=\left\{(z_1,\ldots,z_l)\in\mbC^l\left|
\begin{smallmatrix}
z_1,\ldots,z_l\in\mbZ_{\ne 0},\\
z_1+\ldots+z_l=0,\\
z_i\ne z_j.
\end{smallmatrix}
\right.
\right\}.
$$
From the fact, that $T_0(f)=0$, it follows that $P_{\alpha_1,\ldots,\alpha_l}|_{H_l}=0$. Since $l\ge 2$, we have $k\ge 1$ and
$$
P_{\alpha_1,\ldots,\alpha_l}(z_1,\ldots,z_l)=i(z_1+\ldots+z_l)Q_{\alpha_1,\ldots,\alpha_l}(z_1,\ldots,z_l),
$$
for some homogeneous polynomial $Q_{\alpha_1,\ldots,\alpha_l}(z_1,\ldots,z_l)$ of degree $k-1$. In the same way, as we proved the surjectivity of the map $T_0$, it is easy to show that there exists a differential polynomial $g\in\Lambda_N^{[k-1,l]}$ such that 
$$
T(g)=\sum_{1\le\alpha_1,\ldots,\alpha_l\le N}\sum_{a_1,\ldots,a_l\ne 0}Q_{\alpha_1,\ldots,\alpha_l}(a_1,\ldots,a_l)\left(\prod_{i=1}^lp^{\alpha_i}_{a_i}\right)e^{ix\sum a_j}.
$$ 
It is clear that $\d_x g=f$ and, therefore, $\of=\int f dx=0$. The lemma is proved.


\subsection{Proof of Lemma~\ref{lemma:pullback}}

We have to prove that $\{T_0(\oh_1),T_0(\oh_2)\}_\eta=T_0\left(\{\oh_1,\oh_2\}_{\eta\d_x}\right)$, for arbitrary two local functionals $\oh_1,\oh_2\in\Lambda_N$. We have the following chain of equations:
\begin{align*}
T_0\left(\{\oh_1,\oh_2\}_{\eta\d_x}\right)=&T_0\int\left(\sum_{\alpha,\beta}\frac{\delta\oh_1}{\delta u^\alpha}\eta^{\alpha,\beta}\d_x\frac{\delta\oh_2}{\delta u^\beta}\right)dx=\Coef_{e^{i0x}}\left(\sum_{\alpha,\beta}T\left(\frac{\delta\oh_1}{\delta u^\alpha}\right)\eta^{\alpha,\beta}\frac{d}{dx}T\left(\frac{\delta\oh_2}{\delta u^\beta}\right)\right)=\\
\stackrel{\text{by Lemma~\ref{lemma:variational}}}{=}&\sum_{\alpha,\beta}\sum_{n\ne 0}i n\eta^{\alpha,\beta}\der{T_0(\oh_1)}{p^\alpha_n}\der{T_0(\oh_2)}{p^\beta_{-n}}=\{T_0(\oh_1),T_0(\oh_2)\}_\eta.
\end{align*}
The lemma is proved.

}


\begin{thebibliography}{BCRR14}

\bibitem[ACG11]{ACG11} E. Arbarello, M. Cornalba, P. A. Griffiths. Geometry of algebraic curves. Volume II. With a contribution by Joseph Daniel Harris. Grundlehren der Mathematischen Wissenschaften [Fundamental Principles of Mathematical Sciences], 268. Springer, Heidelberg, 2011. xxx+963 pp.

\bibitem[BCRR14]{BCRR14} A. Brini, G. Carlet, S. Romano, P. Rossi. Rational reductions of the 2D-Toda hierarchy and mirror symmetry. {\it arXiv:1401.5725}.

\bibitem[BCR12]{BCR12} A. Brini, G. Carlet, P. Rossi. Integrable hierarchies and the mirror model of local $\CP1$. {\it Physica D: Nonlinear Phenomena} {\bf 241} (2012) 2156-2167.

\bibitem[Bur13]{Bur13} A. Buryak. Dubrovin-Zhang hierarchy for the Hodge integrals. {\it arXiv:1308.5716}.

\bibitem[Bur14a]{Bur14a} A. Buryak. Equivalence of the open KdV and the open Virasoro equations for the moduli space of Riemann surfaces with boundary. {\it arXiv:1409.3888}.

\bibitem[Bur14b]{Bur14b} A. Buryak. Open intersection numbers and a wave function of the KdV hierarchy. {\it arXiv:1409.7957}.

\bibitem[BPS12a]{BPS12a} A. Buryak, H. Posthuma, S. Shadrin. On deformations of quasi-Miura transformations and the Dubrovin-Zhang bracket. {\it Journal of Geometry and Physics} {\bf 62} (2012), no. 7, 1639-1651.

\bibitem[BPS12b]{BPS12b} A. Buryak, H. Posthuma, S. Shadrin. A polynomial bracket for the Dubrovin-Zhang hierarchies. {\it Journal of Differential Geometry} {\bf 92} (2012), no. 1, 153-185.

\bibitem[BSSZ12]{BSSZ12} A. Buryak, S. Shadrin, L. Spitz, D. Zvonkine. Integrals of psi-classes over double ramification cycles. To appear in {\it American Journal of Mathematics}, {\it arXiv:1211.5273}.

\bibitem[CDZ04]{CDZ04} G. Carlet, B. Dubrovin, Y. Zhang. The extended Toda hierarchy. {\it Moscow Mathematical Journal} {\bf 4} (2004) 313-332.

\bibitem[CvdL13]{CvdL13} G. Carlet, J. van de Leur. Hirota equations for the extended bigraded Toda hierarchy and the total descendent potential of $\CP1$ orbifolds. {\it Journal of Physics A: Mathematical and Theoretical} {\bf 46} (2013) 405205.

\bibitem[CMW12]{CMW12} R. Cavalieri, S. Marcus, J. Wise. Polynomial families of tautological classes on $\mathcal{M}_{g,n}^{rt}$. {\it Journal of Pure and Applied Algebra} {\bf 216} (2012), no. 4, 950-981.

\bibitem[DZ04]{DZ04} B. Dubrovin, Y. Zhang. Virasoro symmetries of the extended Toda hierarchy. {\it Communications in Mathematical Physics} {\bf 250} (2004), 161-193.

\bibitem[DZ05]{DZ05} B.~A.~Dubrovin, Y.~Zhang. Normal forms of hierarchies of integrable PDEs, Frobenius manifolds and Gromov-Witten invariants. A new 2005 version of {\it arXiv:math/0108160v1}, 295 pp.

\bibitem[Dub13]{Dub13} B. Dubrovin. Gromov-Witten invariants and integrable hierarchies of topological type. {\it arXiv:1312.0799}.

\bibitem[Eli07]{Eli07} Y. Eliashberg. Symplectic field theory and its applications. International Congress of Mathematicians. Vol. I, 217-246, Eur. Math. Soc., Zurich, 2007. 

\bibitem[EGH00]{EGH00} Y. Eliashberg, A. Givental, H. Hofer. Introduction to symplectic field theory. GAFA 2000 (Tel Aviv, 1999). Geom. Funct. Anal. 2000, Special Volume, Part II, 560-673.

\bibitem[FP05]{FP05} C. Faber, R. Pandharipande. Relative maps and tautological classes. {\it Journal of the European Mathematical Society} {\bf 7} (2005), no. 1, 13-49. 

\bibitem[FSZ10]{FSZ10} C.~Faber, S.~Shadrin, D.~Zvonkine. Tautological relations and the r-spin Witten conjecture. {\it Annales Scientifiques de l'\'Ecole Normale Sup\'erieure}~(4)~\textbf{43} (2010), no.~4, 621-658.

\bibitem[FR11]{FR11} O. Fabert, P. Rossi. String, dilaton, and divisor equation in symplectic field theory. {\it International Mathematical Research Notices} {\bf 2011}, no. 19, 4384-4404.

\bibitem[FJR13]{FJR13} H. Fan, T. Jarvis, and Y. Ruan. The Witten equation, mirror symmetry, and quantum singularity theory. {\it Annals of Mathematics} {\bf 178} (2013), no. 1, 1-106.

\bibitem[GJV11]{GJV11} I. P. Goulden, D. M. Jackson, R. Vakil. The moduli space of curves, double Hurwitz numbers, and Faber's intersection number conjecture. {\it Annals of Combinatorics} {\bf 15} (2011), no. 3, 381-436.

\bibitem[GV05]{GV05} T. Graber, R. Vakil. Relative virtual localization and vanishing of tautological classes on moduli spaces of curves. {\it Duke Mathematical Journal} {\bf 130} (2005), no. 1, 1-37. 

\bibitem[GZ12]{GZ12} S. Grushevsky, D. Zakharov. The double ramification cycle and the theta divisor. {\it Proceedings of the American Mathematical Society}~{\bf 142} (2014), no. 12, 4053-4064.

\bibitem[Hain13]{Hain13} R. Hain. Normal Functions and the Geometry of Moduli Spaces of Curves. Handbook of moduli. Vol. I, 527-578, Adv. Lect. Math. (ALM), 24, Int. Press, Somerville, MA, 2013. 

\bibitem[Kaz09]{Kaz09} M. Kazarian. KP hierarchy for Hodge integrals. {\it Advances in Mathematics} {\bf 221} (2009), no. 1, 1-21.

\bibitem[Kon92]{Kon92} M. Kontsevich. Intersection Theory on the Moduli Space of Curves and the Matrix Airy Function. {\it Communications in Mathematical Physics} {\bf 147} (1992), 1-23.

\bibitem[KM94]{KM94} M. Kontsevich, Yu. Manin. Gromov-Witten classes, quantum cohomology, and enumerative geometry. {\it Communications in Mathematical Physics} {\bf 164} (1994), no. 3, 525-562.

\bibitem[LM00]{LM00} A.~Losev, Y.~Manin. New moduli spaces of pointed curves and pencils of flat connections. {\it Michigan Mathematical Journal} {\bf 48} (2000), 443-472.

\bibitem[MW13]{MW13} S. Marcus, J. Wise. Stable maps to rational curves and the relative Jacobian. {\it arXiv:1310.5981}.

\bibitem[MST14]{MST14} T. Milanov, Y. Shen, H.-H. Tseng. Gromov--Witten theory of Fano orbifold curves and ADE-Toda Hierarchies. {\it arXiv:1401.5778}.

\bibitem[MT08]{MT08} T. Milanov, H.H. Tseng. The spaces of Laurent polynomials, $\mbP^1$-orbifolds, and integrable hierarchies. {\it Journal fur die Reine und Angewandte Mathematik} {\bf 622} (2008), 189-235.

\bibitem[Mum83]{Mum83} D. Mumford. Towards an enumerative geometry of the moduli space of curves. Arithmetic and geometry, Vol. II, 271-328, Progr. Math., 36, Birkhäuser Boston, Boston, MA, 1983. 

\bibitem[OP06]{OP06} A. Okounkov, R. Pandharipande. The equivariant Gromov-Witten theory of $\mbP^1$. {\it Annals of Mathematics} {\bf 163} (2006), no. 2, 561-605.

\bibitem[PST14]{PST14} R.~Pandharipande, J.~P.~Solomon, R.~J.~Tessler. Intersection theory on moduli of disks, open KdV and Virasoro. {\it arXiv:1409.2191}.

\bibitem[Ros10]{Ros10} P. Rossi. Integrable systems and holomorphic curves. Proceedings of the Gokova Geometry-Topology Conference 2009, 34-57, Int. Press, Somerville, MA, 2010.

\bibitem[Sha09]{Sha09} S. Shadrin. BCOV theory via Givental group action on cohomological field theories. {\it Moscow Mathematical Journal} {\bf 9} (2009), no. 2, 411-429.

\bibitem[Wit91]{Wit91} E. Witten. Two-dimensional gravity and intersection theory on moduli space. {\it Surveys in Differential
Geometry} {\bf 1} (1991), 243-310.

\bibitem[Wit93]{Wit93} E. Witten. Algebraic geometry associated with matrix models of two-dimensional gravity. In: Topological methods in modern mathematics (Stony Brook, NY, 1991), 235-269, Publish or Perish, Houston, TX (1993).

\end{thebibliography}
\end{document}